\definecolor{links}{rgb}{0.7,0,0}   
\definecolor{urls}{rgb}{0,0,0.8}    
\definecolor{cites}{rgb}{0,0,0.8}   
\newcommand{\ubar}[1]{\underaccent{\bar}{#1}}
\newcommand{\sigd}{\sigma^2_{\Delta}}
\newcommand{\sigN}{\sigma^2_\text{W}}
\newcommand{\rt}{{\bf \tilde{r}}}
\newcommand{\Rt}{{\bf \tilde{R}}}
\newcommand{\bx}{{\bf x}}
\newcommand{\by}{{\bf y}}
\newcommand{\br}{{\bf r}}
\newcommand{\bR}{{\bf R}}
\newcommand{\bt}{{\boldsymbol \theta}}
\newcommand{\bT}{{\boldsymbol \Theta}}
\newcommand{\Es}{{E_\text{s}}}
\newcommand{\SNR}{\mathsf{SNR}}
\newcommand{\Ex}[2]{{\mathbb{E}_{#1}\left[#2\right]}} 	
\newcommand{\alphaU}{\alpha_\text{U}}
\newcommand{\betaU}{\beta_\text{U}}
\newcommand{\alphaL}{\alpha_\text{L}}
\newcommand{\betaL}{\beta_\text{L}}
\newtheoremstyle{mystyle}
  {}
  {}
  {}
  {}
  {\itshape}
  {:}
  { }
  {\thmname \thmnumber{{#1}{ }#2}}
\theoremstyle{mystyle}
\newtheorem{theorem}{Theorem}
\newtheorem{prop}[theorem]{Proposition}
\newtheorem{lem}[theorem]{Lemma}
\begin{document}
\title{\huge{On the Capacity of the Wiener Phase-Noise Channel:\\ Bounds and Capacity Achieving Distributions}}
\author{\hspace{0in}\IEEEauthorblockN{M. Reza Khanzadi, \emph{Student Member, IEEE}, Rajet Krishnan, \emph{Student Member, IEEE}, \\Johan S\"{o}der, Thomas Eriksson}

\thanks{M. Reza Khanzadi is with the Department of 
Signals and Systems, and also Department of Microtechnology and Nanoscience, Chalmers University of Technology, 41296 Gothenburg, Sweden. (email: khanzadi@chalmers.se.)

Rajet Krishnan, and Thomas Eriksson are with the Department of 
Signals and Systems, Chalmers University of Technology, 41296 Gothenburg, Sweden. (email: {rajet,thomase}@chalmers.se.)

Johan S\"{o}der is with Ericsson Research, Stockholm, Sweden. (email: johan.soder@ericsson.com)

The simulations were performed on resources at Chalmers Centre for Computational Science and Engineering (C3SE) provided by the Swedish National Infrastructure for Computing (SNIC). 
}
}


\maketitle

\begin{abstract}
In this paper, the capacity of the additive white Gaussian noise (AWGN) channel, affected by time-varying Wiener phase noise is investigated. Tight upper and lower bounds on the capacity of this channel are developed. The upper bound is obtained by using the duality approach, and considering a specific distribution over the output of the channel. In order to lower-bound the capacity, first a family of capacity-achieving input distributions is found by solving a functional optimization of the channel mutual information. Then, lower bounds on the capacity are obtained by drawing samples from the proposed distributions through Monte-Carlo simulations. The proposed capacity-achieving input distributions are circularly symmetric, non-Gaussian, and the input amplitudes are correlated over time. The evaluated capacity bounds are tight for a wide range of signal-to-noise-ratio (SNR) values, and thus they can be used to quantify the capacity. Specifically, the bounds follow the well-known AWGN capacity curve at low SNR, while at high SNR, they coincide with the high-SNR capacity result available in the literature for the phase-noise channel.

\end{abstract}

\begin{IEEEkeywords}
Phase noise, channel capacity, capacity achieving distribution, Wiener process.
\end{IEEEkeywords}

\section{Introduction}
\lettrine[lines=2]{P}HASE NOISE due to frequency instabilities of radio frequency oscillators is a limiting factor in high data rate digital communication systems (e.g., see \cite{Viterbi1963,dallal92,Tomba98,Colavolpe2005, syrjala2009phase, syrjala2011ofdm, Khanzadi2011, Mehrpouyan2012,Krishnan2012_1, Khanzadi2013_0_COMP,Khanzadi2013_1, syrjala2014analysis, Bjornson_MAMIMO2014_1,Krishnan14-01a,krishnan2015_VT_linear_MIMO} and references therein). Phase noise severely impacts the performance of systems that employ dense signal constellations \cite{Krishnan2013_1,Krishnan13_2}. Moreover, the effect of phase noise is more pronounced in high carrier frequency systems, e.g., E-band (60-80 GHz), mainly due to the high levels of phase noise in oscillators designed for such frequencies \cite{Smulders2002GHz,li2003high,Dohler2011,Mehrpouyan2014_EBAND,khanzadi2014_highFreq}. 

The Shannon capacity of the system can be studied in order to investigate the effect of phase noise on the throughput. For stationary phase-noise channels, Lapidoth \cite{lapidoth02-10a} derived an asymptotic capacity expression, that is valid at high SNR. Capacity of the noncoherent channel, where the transmitted signal is affected by uniformly distributed phase noise, has been studied in \cite{colavolpe2001capacity, peleg98-05a,nuriyev05-03a,katz04-10a}. In \cite{katz04-10a}, Katz and Shamai derived upper and lower bounds on the capacity of the noncoherent phase-noise channel for non-asymptotic SNR regimes. They showed that the capacity-achieving distribution of the noncoherent memoryless channel is discrete with an infinite number of mass points. In \cite{durisi12-08a,colavolpe2001capacity}, the capacity bounds in \cite{katz04-10a} have been extended to the block memoryless phase-noise channel, where the phase noise was modeled as a constant over a number of consecutive symbols. It was shown in \cite{colavolpe2001capacity} that the capacity-achieving input distribution of the block memoryless phase-noise channel is not Gaussian (unlike the additive white Gaussian noise channel). In \cite{peleg98-05a}, the constrained capacity of M-ary phase-shift keying over a noncoherent phase-noise channel has been investigated. Capacity of partially coherent channels, where the phase noise is estimated at the receiver, and the signal is affected by the residual phase noise estimation errors, has been studied in \cite{hou02-05a}. The achievable information
rate of phase-noise channel and methods for the calculation of that have been
discussed in, e.g.,~\cite{barletta11-11a,barbieri11-12a,barletta12-05a}. Achievable information rate of multi-carrier radio links, affected by phase noise, has been analyzed in~\cite{gokceoglu2013mutual}. Effects of using multisampling receivers on the achievable information rate of the phase-noise channel has been recently investigated in~\cite{ghozlan13-07a,ghozlan14-01a,barletta14-01a}.

There has been limited number of studies on characterizing the capacity of channels affected by phase noise with memory~(e.g.,~\cite{lapidoth02-10a,durisi2013capacity,durisi13-09b,ghozlan13-07a}). The Wiener phase-noise channel that models many practical scenarios belongs to this family of channels. In~\cite{lapidoth02-10a}, Lapidoth characterized the capacity of the Wiener phase-noise channel at high SNR. It was shown in~\cite{lapidoth02-10a} that circularly symmetric input alphabets with Gamma-distributed amplitudes can achieve the capacity of the stationary phase-noise channel (with or without memory) at high SNR. Capacity results of~\cite{lapidoth02-10a} have been recently extended to multi-antenna systems in~\cite{durisi2013multiplexing,durisi2013capacity,Khanzadi14-T01a}. However, the capacity-achieving input distribution of the Wiener phase-noise channel and the closed-form capacity of this channel, valid for all SNR values, have not been derived yet.  

\subsection*{Contributions}

In this paper, we derive tight upper and lower bounds on the capacity of the additive white Gaussian noise (AWGN) channel affected by Wiener phase noise when the channel input is subject to an average-power constraint. The upper bound on the capacity is found by using the duality approach, and considering a specific distribution over the output of the channel. We determine a family of input distributions that result in a tight lower bound on the capacity. We show that the capacity-achieving input distribution is circularly symmetric but non-Gaussian. We also show that unlike for memoryless channels, the capacity-achieving input alphabets are correlated over time. Lower bounds on the capacity are obtained by numerical calculation of the information rates, achievable by samples generated from the proposed input distributions through Monte-Carlo simulations. The developed upper and lower bounds are tight for a wide range of SNR values. This helps to more accurately quantify the capacity of the phase-noise channel compared to previously available results in the literature~(e.g.,~\cite{lapidoth02-10a}).

\subsection*{Organization of the Paper}
The paper is organized as follows. In Section~\ref{sec:System Model}, the system model and the corresponding amplitude-phase channel are introduced. Using the amplitude-phase channel, the mutual information between the input and the output is examined in Section~\ref{sec:Mutual Information}. In Section~\ref{sec:upper_bound}, a capacity upper bound is derived. In Section~\ref{sec:lower_bound}, we obtain the closed-form expression for a family of capacity-achieving distributions. Finally, in Section~\ref{sec:Results}, the proposed lower and upper bounds are compared against each other and also the results available in the literature.

\subsection*{Notation}

With $\mathcal{N}(0,\sigma^2)$ and $\mathcal{CN}(0,\sigma^2)$, we denote the probability distribution of a real Gaussian random variable, and of a circularly symmetric complex Gaussian random variable with zero mean and variance $\sigma^2$. The uniform distribution over the interval $[0,2\pi)$ is denoted as~$\mathcal{U}(0,2\pi)$. We use~$|\cdot|$ to denote the absolute value of scalars, and determinant of matrices. The Euclidean norm of vectors is denoted by $||\cdot||$. Finally,~$\mathcal{D}(\cdot||\cdot)$ denotes the relative entropy between two probability distributions. For notational convenience, $f(x)$ and $f(y)$ refer to two different  probability distribution functions, $f_x(x)$ and $f_y(x)$, respectively.
\section{System Model}\label{sec:System Model}
\subsection{The channel}
The input-output relation of the discrete-time Wiener phase-noise channel can be written as \cite{Colavolpe2005}
\begin{align}
y_k = x_k e^{j\phi_k} + w_k, \label{eq:channel}
\end{align}
where $x_k$ is the transmitted symbol, and $w_{k}$ is circular symmetric AWGN independently distributed from $\mathcal{CN}(0,2\sigN)$.
The process, $\phi_k$, is the discrete-time Wiener phase noise 
\begin{align}
&\phi_k = \phi_{k-1} + \Delta_k, &&\Delta_k \sim \mathcal{N}(0,\sigma^2_{\Delta}).
\end{align}
This discrete-time process corresponds to a sampled version of a continuous-time Brownian motion process with uncorrelated increments.\footnote{For discussions on the limitations of the Wiener phase noise model see \cite{Khanzadi2011,Khanzadi2013_2_ColoredPNEst,Khanzadi2013_1}.}  Samples are taken every $T_\text{s}$ seconds, the transmission symbol interval.\footnote{Note that the system model~\eqref{eq:channel} is derived under the assumption that the continuous-time phase-noise process remains constant over the duration of the symbol time. This assumption allows us to obtain a discrete-time equivalent channel model by sampling at Nyquist rate. As shown recently in \cite{ghozlan13-07a,ghozlan14-01a,barletta14-01a}, by dropping this assumption one may obtain different high-SNR capacity characterization.}  The continuous time process of the corresponding oscillator has a Lorentzian spectrum \cite{Demir2000,Khanzadi2011}. This spectrum is fully characterized by a single parameter; the $3$dB single-sided bandwidth, $f_{3\text{dB}}$, which depends on central frequency and design technology of the oscillator~\cite{khanzadi2014_highFreq}. The innovation variance for the discrete-time phase-noise process is $\sigma^2_{\Delta} = 4\pi f_{3\text{dB}} T_\text{s}$.\footnote{Note that the innovation variance can equivalently be found directly from the spectrum of the phase-noise process~\cite{khanzadi2014_highFreq}.}

\subsection{Amplitude and phase input-output relations} \label{subsec:amplitude phase}
The input $x_k$, to the channel \eqref{eq:channel} and the output $y_k$ are complex numbers, and can be represented in polar form as $x_k = R_k e^{j\Theta_k}$, and $y_k = r_k e^{j\theta_k}$. In this notation, $R_k$ and $\Theta_k$ denote the amplitude and the phase of the transmitted symbol $x_k$, while $r_k$ and $\theta_k$ denote the amplitude and the phase of the received sample $y_k$, respectively. The input-output relations between the transmitted and received amplitude and phase are
\begin{align}
&r_k  = \sqrt{(R_k + w_{k,\parallel})^2 + w^2_{k,\perp}} \label{eq:r_k} \\
&\theta_k = \Theta_k +  N_k  + \phi_k,  \label{eq:theta_k}
\end{align}
where 
\begin{align}
N_k=\arctan \left(\frac{w_{k,\perp}}{R_k+w_{k,\parallel}}\right),
\end{align}
and $w_{k,\parallel}$ and $w_{k,\perp}$ denote the parts of $w_k$ that are parallel~(in-phase) and orthogonal to the transmitted signal, respectively. In the rest of the paper, we investigate the capacity of the phase-noise channel~\eqref{eq:channel} by considering the equivalent amplitude and phase channels stated in~\eqref{eq:r_k} and \eqref{eq:theta_k}.

\subsection{Definition of Capacity}
The capacity of the phase-noise channel \eqref{eq:channel} is given by \cite{cover06-a} 
\begin{align}
C (\SNR) &=  \lim_{n \to \infty} \underset{f(\bx)}{\sup} \; \: \frac{1}{n}I(\bx;\by)  \label{eq:capacity def}\\
&=\lim_{n \to \infty} \underset{f(\bR,\bT)}{\sup} \; \: \frac{1}{n} I(\br,\bt;\bR,\bT),  \label{eq:capacity_polar_def}
\end{align}
where~$\bx=\{x_k\}_{k=1}^n$,~$\by=\{y_k\}_{k=1}^n$,~$\br=\{r_k\}_{k=1}^n$,~$\bt=\{\theta_k\}_{k=1}^n$,~$\bR=\{R_k\}_{k=1}^n$, and~$\bT=\{\Theta_k\}_{k=1}^n$.  
The supremum in~\eqref{eq:capacity def} and~\eqref{eq:capacity_polar_def} is computed over all probability distributions on the input that satisfies 
\begin{align}\label{eq:power_constraint}
\frac{1}{n} \sum_{k=1}^n\Ex{}{|x_k|^2} = \frac{1}{n} \sum_{k=1}^n\Ex{}{R_k^2} \leq \Es,
\end{align}
where~$\Es$ is the maximum average power. 
The SNR is defined as $\SNR=\Es/2\sigN$ throughout the paper.

\section{Mutual Information}\label{sec:Mutual Information}
The mutual information on the right-hand side~(RHS) of~\eqref{eq:capacity_polar_def} is written as
\begin{align}
\frac{1}{n} I(\br,\bt;\bR,\bT) \\
&\hspace{-2cm}=\frac{1}{n}\left( h(\br,\bt) - h(\br,\bt|\bR,\bT) \right) \\
&\hspace{-2cm}=\frac{1}{n}\left( h(\br) + h(\bt|\br) - h(\br|\bR,\bT) - h(\bt|\br,\bR,\bT) \right)\label{eq:mutual chain} \\
&\hspace{-2cm}= \frac{1}{n}\left( h(\br) + h(\bt|\br)- h(\br|\bR) - h(\bt|\br,\bR,\bT) \right) \label{eq:inf channels}\\ 
&\hspace{-2cm}= \frac{1}{n}\left( I(\br;\bR) + h(\bt|\br) - h( \bT + {\bf N} + {\boldsymbol \phi} | \br,\bT,\bR ) \right) \label{eq:inf channels_2}\\
&\hspace{-2cm}= \frac{1}{n}\left( I(\br;\bR) + h(\bt|\br) - h({\bf N} + {\boldsymbol \phi} | \br,\bR ) \right), \label{eq:inf channels_3}
\end{align}
where~${\boldsymbol \phi}=\{\phi_k\}_{k=1}^n$, and~${\bf N}=\{N_k\}_{k=1}^{n}$. In~\eqref{eq:mutual chain} the chain rule for entropy is used, and (\ref{eq:inf channels}) follows because $\br$ and $\bT$ are independent (see~(\ref{eq:r_k})). In~\eqref{eq:inf channels_2} and~\eqref{eq:inf channels_3}, we used the definition of $\theta_k$, given in~\eqref{eq:theta_k}.

We present the following lemma pertaining to the capacity-achieving input distribution, which will be used throughout the paper.
\begin{lem}
\label{th:uniform_pahse}
The capacity-achieving input of the channel \eqref{eq:channel} is circularly symmetric, i.e., $\{\Theta_k\}_{k=1}^n$ are independently and identically distributed from $\mathcal{U}(0,2\pi)$ and are independent of $\bR$. 
\end{lem}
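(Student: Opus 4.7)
The strategy is to start from the decomposition~\eqref{eq:inf channels_3} of the mutual information and show that, among the three terms on the right-hand side, only $h(\bt|\br)$ depends on the conditional distribution $f(\bT|\bR)$ of the input phases. The goal is then to exhibit an upper bound on $h(\bt|\br)$ that depends neither on $f(\bT|\bR)$ nor on $f(\bR)$, and to verify that this bound is attained precisely by the i.i.d.\ uniform choice of $\bT$, independent of $\bR$. Since this choice leaves the marginal distribution of $\bR$ unchanged, the average-power constraint~\eqref{eq:power_constraint} is unaffected, so any candidate input distribution can be replaced without loss by one with circularly symmetric phases.

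Concretely, the steps I would carry out are: (i) observe from~\eqref{eq:r_k} that $\br$ is a deterministic function of $(\bR,\{w_{k,\parallel},w_{k,\perp}\})$, and from the definition of $N_k$ that $\mathbf{N}$ is likewise a function of $(\bR,\mathbf{w})$. Combined with the fact that the phase-noise process $\boldsymbol{\phi}$ and the additive noise $\mathbf{w}$ are both independent of the input, this shows that the joint law of $(\br,\bR,\mathbf{N},\boldsymbol{\phi})$ is determined entirely by $f(\bR)$. Hence both $I(\br;\bR)$ and $h(\mathbf{N}+\boldsymbol{\phi}\mid\br,\bR)$ depend only on $f(\bR)$. (ii)~Since $\theta_k$ takes values on the circle $[0,2\pi)$, the differential entropy obeys
\begin{align}
h(\bt|\br)\le h(\bt)\le n\log(2\pi),
\end{align}
with equality iff $\bt$ is uniform on $[0,2\pi)^n$ and independent of $\br$. (iii)~Plug in the candidate distribution: let $\{\Theta_k\}_{k=1}^n$ be i.i.d.\ $\mathcal{U}(0,2\pi)$ and independent of $\bR$. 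Interpreting~\eqref{eq:theta_k} modulo $2\pi$, this writes $\theta_k=\Theta_k+(N_k+\phi_k)\bmod 2\pi$. Since $\Theta_k$ is uniform on the additive group $\mathbb{R}/2\pi\mathbb{Z}$ and independent of $(N_k,\phi_k,\br)$, the translation invariance of Haar measure yields that each $\theta_k$ is $\mathcal{U}(0,2\pi)$, mutually independent, and independent of $\br$. The entropy bound is therefore attained. (iv)~Conclude that for any feasible input distribution one can replace $f(\bT|\bR)$ by the i.i.d.\ uniform choice without decreasing the mutual information and without violating~\eqref{eq:power_constraint}, establishing the claim.

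\paragraph*{Main obstacle} The only delicate point is the third step: arguing rigorously that adding an independent uniform random phase on the torus produces a uniform output phase which is, moreover, jointly independent of the observables $\br$ and of the other phase terms. This requires the interpretation of~\eqref{eq:theta_k} modulo $2\pi$, together with the standard fact that convolution with Haar measure on a compact group absorbs any independent distribution. Once this is accepted, the remaining steps reduce to bookkeeping on the functional dependencies of the individual terms in~\eqref{eq:inf channels_3}.
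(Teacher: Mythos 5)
Your proof is correct. The paper itself gives no argument for this lemma: it simply cites \cite[Prop.~7]{moser09-06a}, where the analogous statement is proved for fading channels with memory. Your blind reconstruction is essentially the standard argument that underlies that reference, instantiated directly for the Wiener phase-noise channel: in the decomposition~\eqref{eq:inf channels_3}, the terms $I(\br;\bR)$ and $h(\mathbf{N}+\boldsymbol{\phi}\mid\br,\bR)$ are functionals of $f(\bR)$ alone (since $\br$ and $\mathbf{N}$ are functions of $(\bR,\mathbf{w})$ and $(\mathbf{w},\boldsymbol{\phi})$ is independent of the input), so only $h(\bt|\br)$ can be influenced by $f(\bT|\bR)$; that term is capped at $n\log_2 2\pi$ by the maximum-entropy property of the uniform law on a bounded support together with ``conditioning reduces entropy''; and the cap is attained by i.i.d.\ uniform phases independent of $\bR$ via the crypto-lemma (convolution with Haar measure on $\mathbb{R}/2\pi\mathbb{Z}$ absorbs any independent distribution), without affecting the power constraint, which involves only $\bR$. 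You correctly flag the one genuinely delicate point, namely that~\eqref{eq:theta_k} must be read modulo $2\pi$ for the bound $h(\bt)\le n\log_2 2\pi$ to hold; this is consistent with how the paper uses the lemma (it sets $h(\bt|\br)=n\log_2 2\pi$ immediately afterwards). What your self-contained version buys over the paper's citation is transparency: it makes explicit exactly which terms of the mutual-information decomposition are phase-insensitive, rather than appealing to a result proved for a different (fading) channel model whose transfer to the phase-noise setting the reader must verify.
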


\begin{proof}[Proof:\nopunct]
The proof directly follows that of~\cite[Prop.~7]{moser09-06a}, where it is shown that the capacity-achieving input of the fading channel with memory is circularly symmetric.
\end{proof}

Based on~(\ref{eq:theta_k}) and the result of Lemma~\ref{th:uniform_pahse}, it can be deduced that the output phase is also uniformly distributed and, hence,~$h(\bt|\br)=n\log_2 2\pi$. 
Thus,~(\ref{eq:inf channels_3}) can be rewritten as
\begin{align}
\frac{1}{n} I(\br,\bt;\bR,\bT)&= \frac{1}{n}\left( I(\br;\bR) - h({\bf N} + {\boldsymbol \phi} | \br,\bR ) \right)+ \log_2 2\pi.\label{eq:inf_uniform_input}
\end{align}

Next, we use the definition of the phase channel~\eqref{eq:theta_k}, and rewrite the second term on the RHS of \eqref{eq:inf_uniform_input} as
\begin{align}
& h( {\bf N} + {\boldsymbol \phi} | \br, \bR)\notag\\
&=h(\ubar{D}( {\bf N} + {\boldsymbol \phi}) | \br, \bR)-\log_2|\ubar{D}|\label{eq:differential}\\
&= h(N_n-N_{n-1}+\Delta_n,\dots,\notag\\&\hspace{3cm}N_2-N_{1}+\Delta_2, N_1+\Delta_1| \br , \bR) \label{eq:differential 0}\\
&= h( \{ N_k-N_{k-1}+\Delta_k \}_{k=2}^n, N_1+\Delta_1 | \br , \bR) \label{eq:differential 1}\\
&=  h \left( \{a_k\}_{k=1}^n| \br,\bR \right) \label{eq:differential 2}
%
\end{align}
where 

\begin{align}
\label{eq:a_definition}
a_k \triangleq
  \begin{cases}
   N_1+\Delta_1				   & \text{if } k = 1 \\
   N_k-N_{k-1}+\Delta_k       & \text{if } k > 1
  \end{cases}.
\end{align}
In~\eqref{eq:differential}, we choose $\ubar{D}$ to be the \emph{difference matrix} defined as 
\begin{align}
\ubar{D}=\left[ \begin{array}{ccccc}
1 & -1 & 0 & \dots &0\\
0 & 1  & -1& \ddots & \vdots\\
\vdots & \ddots  & 1 & \ddots & 0\\
\vdots & \ddots  & \ddots & \ddots & -1\\
0 & \dots  & \dots & 0 & 1\\
\end{array} \right]_{n\times n},
\end{align}
where $|\ubar{D}|=1$. The equality in~\eqref{eq:differential} follows from~\cite[Eq.~8.71]{cover06-a}, and in~\eqref{eq:differential 0}, we used that~$\log_2(|\ubar{D}|)=0$.
The noise vector ${\bf N} + {\boldsymbol \phi}$ is rearranged in~\eqref{eq:differential} as the difference between the consecutive noise samples in order to resolve the infinite memory of the phase-noise process $\phi_k$. 
%
By substituting~\eqref{eq:differential 2} in~\eqref{eq:inf_uniform_input}, we obtain
\begin{align}
\frac{1}{n} I(\br,\bt;\bR,\bT)&= \frac{1}{n}\left( I(\br;\bR) - h \left( \{a_k\}_{k=1}^n| \br,\bR \right) \right)+ \log_2 2\pi.
\label{eq:info_rate_final}
\end{align}
In order to find upper and lower bounds on the capacity, we need to evaluate the two first terms on the RHS of (\ref{eq:info_rate_final}). 
\section{Capacity Upper Bound}\label{sec:upper_bound}
In this section, an upper bound on the capacity of the phase-noise channel \eqref{eq:channel} is derived. We first find a lower bound for the entropy term on the RHS of (\ref{eq:info_rate_final}) as follows
\begin{align}
&h \left( \{a_k\}_{k=1}^n| \br,\bR \right)\notag\\
&\hspace{1.5cm}=\sum_{k=1}^n  h \left( a_k|a_{k-1},\dots,a_1, \br,\bR \right)\label{eq:diff_lower_bound_1}\\
&\hspace{1.5cm}\geq\sum_{k=1}^n  h \left( a_k|a_{k-1},\dots,a_1,N_{k-1}, \br,\bR \right)\label{eq:diff_lower_bound_2}\\
&\hspace{1.5cm}=\sum_{k=1}^n  h \left( a_k|N_{k-1}, \br,\bR \right)\label{eq:diff_lower_bound_3}\\
&\hspace{1.5cm}=\sum_{k=1}^n  h \left( N_k+\Delta_k| r_k,R_k \right)\label{eq:diff_lower_bound_4}\\
&\hspace{1.5cm}=n  h \left( N_n+\Delta_n|r_n,R_n \right)\label{eq:diff_lower_bound_5}.
\end{align}
Here, in~\eqref{eq:diff_lower_bound_1}, the chain rule of differential entropy is used. In~\eqref{eq:diff_lower_bound_2}, we conditioned on the complete knowledge of the noise sample~$N_{k-1}$, and we used that conditioning reduces entropy. Equality in \eqref{eq:diff_lower_bound_3} holds because $a_k$ is conditionally independent of $(a_{k-1},\dots,a_1)$ given~$N_{k-1}$~(see \eqref{eq:a_definition}). Finally, in \eqref{eq:diff_lower_bound_5}, we assumed that $R_k$, and therefore $r_k$, and $N_k+\Delta_k$ are stationary. As the channel is stationarity, one may show that there is no gain in choosing a non-stationary input.

We next upper-bound the mutual information on the RHS of~(\ref{eq:info_rate_final}) by using the \emph{duality approach}~\cite[Th.~5.1]{lapidoth03-10a}. Let $f(\br|\bR)$ denote the conditional probability of~$\br$ given~$\bR$,~and $f(\br)$ denote the distribution of~$\br$ for a given input distribution~$f(\bR)$, and lastly, let~$q(\br)$ be an arbitrary distribution of~$\br$. The mutual information in~\eqref{eq:info_rate_final} can be upper-bounded by using duality as~\cite[Th.~5.1]{lapidoth03-10a}
\begin{align}
	I(\br;\bR)&=\Ex{f(\bR)}{\mathcal{D}\big(f(\br|\bR)||f(\br)\big)}\label{eq:duality_step_1}\\
				 &=\Ex{f(\bR)}{\mathcal{D}\big(f(\br|\bR)||q(\br)\big)}-\mathcal{D}\big(f(\br)||q(\br)\big)\label{eq:duality_step_2}\\
&\leq\Ex{f(\bR)}{\mathcal{D}\big(f(\br|\bR)||q(\br)\big)}\label{eq:duality_step_3}\\
&=-\Ex{f(\br)}{\ln\big( q(\br)\big)}-h(\br|\bR).\label{amplitude_channel_MI}
\end{align}
Here,~$\mathcal{D}(\cdot||\cdot)$ denotes the relative entropy between two probability distributions~\cite[Eq.~8.46]{cover06-a}; \eqref{eq:duality_step_1} follows the definition of the mutual information~\cite[Eq.~8.49]{cover06-a}; in \eqref{eq:duality_step_2} we used Tops\o e's identity \cite{topsoe67-a}; \eqref{eq:duality_step_3} follows because of the nonnegativity of relative entropy~\cite[Thm.~8.6.1]{cover06-a}. Finally,~\eqref{amplitude_channel_MI} follows diretly from the definition of the relative entropy~\cite[Eq.~8.46]{cover06-a}. 

From~\eqref{eq:duality_step_3}, we see that any choice of the auxiliary output distribution~$q(\br)$ results in an upper bound for~$I(\br;\bR)$. However, $q(\br)$ needs to be selected such that a tight upper bound is obtained. Specifically, we choose the output amplitudes to be independently distributed from~$q(r)$, which is a particular mixture of a half-normal distribution and a Rayleigh distribution
\begin{align}
q(r) &= \frac{\alphaU(\mu)}{\sqrt{\frac{\sigN}{(r+\mu)^2}+\sigd}} e^{-\betaU(\mu) r^2},\quad r>0. \label{eq:pdf upper}
\end{align}
We will soon motivate the form of~$q(r)$. Moreover, in Section~\ref{sec:Results}, we will show that this choice of $q(r)$ results in a tight upper bound on the capacity. 

In~\eqref{eq:pdf upper}, $\mu\geq 0$ is a constant that will be optimized later to tighten the upper bound. For any $\mu$, the parameters $\alphaU(\mu)$ and $\betaU(\mu)$ should be chosen such that certain constraints on $q(r)$ are satisfied. 
The first constraint is based on the fact that $q(r)$ is a probability distribution function and, hence, must integrate to one
\begin{align}
\int_0^\infty q(r) \text{d}r= 1 \label{eq:L_U1}.
\end{align}
The second constraint is due to the input power constraint \eqref{eq:power_constraint}, and can be found from \eqref{eq:r_k}
\begin{align}
\int_0^\infty r^2 q(r) \text{d}r &= \Es + 2\sigN.\label{eq:L_U2}
\end{align}

Although finding closed-form expressions of $\alphaU(\mu)$ and $\betaU(\mu)$ is not straightforward, it is possible to determine their values numerically. The numerical method that we used for computing these parameters is presented in Appendix~\ref{sec:numerical_calc_alpha_beta}, and  Tab.~\ref{tab:upper_bound} contains their computed values for~$\mu=0$ and for various values of $\sigN$ and $\sigd$. 

The transition between the half-normal, and the Rayleigh distributions
in~\eqref{eq:pdf upper} is based on the values of $\sigN$ and $\sigd$. At high SNR, where the phase noise dominates~($\sigN<<\sigd$),~$q(r)$ is asymptotically half-normal, while it is a Rayleigh distribution for the low SNR values. 

This choice of auxiliary output distribution in~\eqref{eq:pdf upper} is motivated as follows: i) The capacity-achieving distribution of the Gaussian channel is a normal distribution~\cite{cover06-a}, and thus the input (and also the output) amplitude follows a Rayleigh distribution, ii) As shown in~\cite{lapidoth02-10a}, a tight upper bound for the phase-noise channel at high SNR can be found by using the duality approach, and considering an optimized Gamma distribution as an auxiliary distribution on $|y|^2$. In that case, by following the standard technique for determining the probability density function of a transformed random variable~\cite[Ch.~5]{papoulis2002probability}, it is straightforward to show that~$r=|y|$ follows a half-normal distribution. 
%

%

\begin{table}[t]
\caption{Numerically calculated values of $\alphaU(\mu=0)$ and $\betaU(\mu=0)$, when $\Es=1$, and for various $\sigN$ and $\sigd$.}
\label{tab:upper_bound}
\centering
\renewcommand\arraystretch{1.3}
\begin{tabular}{c|c|c|c|c|}
\cline{2-4}
& $\sigN$& $\alphaU(\mu=0)$ & $\betaU(\mu=0)$ \\ \cline{1-4}
\multicolumn{1}{ |c| }{\multirow{5}{*}{$\sigd=10^{-2}~[\text{rad}^2]$} } &
\multicolumn{1}{  c| } {$5\times 10^{-2}$} & 0.43 & 0.88\\ \cline{2-4}
\multicolumn{1}{ |c|  }{}                        &
\multicolumn{1}{  c| }{$5\times 10^{-3}$} & 0.17 & 0.73\\ \cline{2-4}
\multicolumn{1}{ |c|  }{}                        &
\multicolumn{1}{  c| }{$5\times 10^{-4}$} & 0.10 & 0.59\\ \cline{2-4}
\multicolumn{1}{ |c|  }{}                        &
\multicolumn{1}{  c| }{$5\times 10^{-5}$} & 0.09 & 0.53\\ \cline{2-4}
\multicolumn{1}{ |c|  }{}                        &
\multicolumn{1}{  c| }{$5\times 10^{-6}$} & 0.08 & 0.51\\ \cline{1-4}
\multicolumn{1}{ |c| }{\multirow{5}{*}{$\sigd=10^{-3}~[\text{rad}^2]$} } &
\multicolumn{1}{  c| }{$5\times 10^{-2}$} & 0.43 & 0.94\\ \cline{2-4}
\multicolumn{1}{ |c|  }{}                        &
\multicolumn{1}{  c| }{$5\times 10^{-3}$} & 0.14 & 0.92\\ \cline{2-4}
\multicolumn{1}{ |c|  }{}                        &
\multicolumn{1}{  c| }{$5\times 10^{-4}$} & 0.05 & 0.73\\ \cline{2-4}
\multicolumn{1}{ |c|  }{}                        &
\multicolumn{1}{  c| }{$5\times 10^{-5}$} & 0.03 & 0.59\\ \cline{2-4}
\multicolumn{1}{ |c|  }{}                        &
\multicolumn{1}{  c| }{$5\times 10^{-6}$} & 0.03 & 0.53\\ \cline{1-4}
\end{tabular}
\end{table}

By substituting \eqref{eq:pdf upper} in \eqref{amplitude_channel_MI}, we obtain
\begin{align}
	I(\br;\bR)&\leq-n\Ex{}{\log_2 q(r)}-n h(r|R),\label{amplitude_channel_MI_with_pdf_0}\\				 &=-n\log_2(\alphaU(\mu))+n\frac{\betaU(\mu)}{\ln(2)}\Ex{}{r^2}\notag\\
&\quad+\frac{n}{2}\Ex{}{\log_2\left(\frac{\sigN}{(r+\mu)^2}+\sigd\right)}-n h(r|R),\label{amplitude_channel_MI_with_pdf}
\end{align}
where in \eqref{amplitude_channel_MI_with_pdf_0}, we used that $r$ is independently and identically distributed (iid). From \eqref{eq:L_U2}, we have~$\Ex{}{r^2}=\Es + 2\sigN$. By substituting \eqref{eq:diff_lower_bound_5} and \eqref{amplitude_channel_MI_with_pdf} in \eqref{eq:info_rate_final}, then~\eqref{eq:info_rate_final} in~\eqref{eq:capacity def}, we obtain an upper bound on the capacity as
\begin{align}
C(\SNR) \leq &-\log_2\left(\frac{\alphaU(\mu)}{2\pi}\right)+\frac{\betaU(\mu)}{\ln(2)}(\Es + 2\sigN)\notag\\&
+\sup_{f(R)}\Bigg\{\frac{1}{2}\Ex{}{\log_2\left(\frac{\sigN}{(r+\mu)^2}+\sigd\right)}\notag\\
&\quad\quad\quad\quad-h(r|R) - h\left( N +\Delta | r , R \right)\Bigg\},\label{eq:upper_bound_before_ecape} \\ 
=&-\log_2\left(\frac{\alphaU(\mu)}{2\pi}\right)+\frac{\betaU(\mu)}{\ln(2)}(\Es + 2\sigN)\notag\\&
+\sup_{f(R)}\Bigg\{\mathbb{E}_{f(R)}\Bigg[\frac{1}{2}\Ex{f(w_{\parallel},w_{\perp})}{\log_2\left(\frac{\sigN}{(r+\mu)^2}+\sigd\right)}\notag\\
&\quad\quad\quad\quad-h(r|R) - h\left( N +\Delta | r , R \right)\Bigg]\Bigg\}.\label{eq:upper_bound_before_ecape_clarity1}  
\end{align}
Finally, the capacity of the phase-noise channel~\eqref{eq:channel} can be bounded as~$C(\SNR)\leq C_\text{U}(\SNR)$, where
\begin{align}
C_\text{U}(\SNR) = \min_{\mu\geq 0}\Big\{
-\log_2\left(\frac{\alphaU(\mu)}{2\pi}\right)&+\frac{\betaU(\mu)}{\ln(2)}(\Es + 2\sigN)\notag\\&+\max_{R\geq 0} \mathcal{G}(R)\Big\},\label{eq:upper_bound_before_ecape_2}  
\end{align}
and  
\begin{align}
&\mathcal{G}(R) = \notag\\&\hspace{0.5cm}\frac{1}{2}\Ex{f(w_{\parallel},w_{\perp})}{\log_2\left(\frac{\sigN}{\left(\sqrt{(R + w_{\parallel})^2 + w^2_{\perp}}+\mu\right)^2}+\sigd\right)}\notag\\
				&\hspace{0.5cm}-h\left(\sqrt{(R + w_{\parallel})^2 + w^2_{\perp}}\right)\notag\\
				&\hspace{0.5cm}-h\left(\arctan \frac{w_{\perp}}{R+w_{\parallel}} +\Delta \Big | r\right).\label{eq:definition_of_g_before_escape}
\end{align} 
In~\eqref{eq:upper_bound_before_ecape_2}, the expectation over~$f(R)$ is upper-bounded by the maximum value of the expression, which expectation is taken over, for a given $R$. Finally, the bound can be tightened by minimizing over $\mu\geq 0$.

The upper bound in~\eqref{eq:upper_bound_before_ecape} is further simplified and the final result is presented in the following proposition.
\begin{prop}
\label{prop:upper_bound}
Capacity of the Wiener phase-noise channel~\eqref{eq:channel} is upper-bounded as
\begin{align}
\label{eq:upper_bound_escape_definition_SNR}
 C(\SNR) \leq C_{\tilde{\text{U}}}(\SNR) + o(1),\quad \SNR\to\infty 
\end{align}
where $o(1)$ denotes a function that vanishes as SNR grows large, and
\begin{align}\label{eq:final_upBound}
&C_{\tilde{\text{U}}}(\SNR)=\notag\\ 
&\hspace{0.7cm}\frac{\betaU(\mu=0)}{\ln(2)}(\Es +2\sigN) - \frac{1}{2}\log_2\sigN e^2 \alphaU^2(\mu=0).
\end{align}
\end{prop}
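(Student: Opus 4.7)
The plan is to specialize the general upper bound~\eqref{eq:upper_bound_before_ecape_2} to $\mu=0$ and to evaluate the inner optimization $\max_{R\geq 0}\mathcal{G}(R)$ in the high-SNR limit. Since~\eqref{eq:upper_bound_before_ecape_2} is valid for every $\mu\geq 0$, fixing $\mu=0$ preserves the upper-bound property at the cost of some slack; the resulting slack will turn out to be $o(1)$ as $\SNR\to\infty$, which is exactly the error allowed by the proposition statement.

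The first step is to analyze each of the three summands of $\mathcal{G}(R)$ in~\eqref{eq:definition_of_g_before_escape} for large $R$. Using the Taylor expansion $\sqrt{(R+w_{\parallel})^{2}+w_{\perp}^{2}}=R+w_{\parallel}+w_{\perp}^{2}/(2R)+O(1/R^{2})$, the output amplitude conditional on $R$ converges in distribution to $R+w_{\parallel}\sim\mathcal{N}(R,\sigN)$, so by continuity of differential entropy we expect $h(r|R)\to\tfrac{1}{2}\log_{2}(2\pi e\sigN)$. The same expansion reduces the logarithmic term inside the expectation to $\tfrac{1}{2}\log_{2}(\sigN/R^{2}+\sigd)+o(1)$. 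For the phase-noise term, the approximation $\arctan\bigl(w_{\perp}/(R+w_{\parallel})\bigr)=w_{\perp}/R+O(1/R^{2})$ implies that $N+\Delta$ is asymptotically $\mathcal{N}(0,\sigN/R^{2}+\sigd)$ and becomes asymptotically independent of $r$ (which in the limit depends only on $w_{\parallel}$), so $h(N+\Delta|r)\to\tfrac{1}{2}\log_{2}\bigl(2\pi e(\sigN/R^{2}+\sigd)\bigr)$.

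Combining these three limits, the log-expectation and the phase-noise entropy cancel up to the constant $\tfrac{1}{2}\log_{2}(2\pi e)$, independently of $R$, leaving
\begin{align}
\lim_{R\to\infty}\mathcal{G}(R)=-\tfrac{1}{2}\log_{2}\bigl(4\pi^{2}e^{2}\sigN\bigr).
\end{align}
Substituting this value into~\eqref{eq:upper_bound_before_ecape_2} at $\mu=0$ and consolidating the constants $-\log_{2}(\alpha_{\text{U}}(0)/(2\pi))$, $-\log_{2}(2\pi)$, $-\log_{2}e$ and $-\tfrac{1}{2}\log_{2}\sigN$ into $-\tfrac{1}{2}\log_{2}(\sigN e^{2}\alpha_{\text{U}}^{2}(0))$ yields the right-hand side of~\eqref{eq:final_upBound}.

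The main technical obstacle is to control the $o(1)$ remainders rigorously. The Taylor expansions above are accurate only on the event that $R+w_{\parallel}$ is bounded away from zero, which can fail when $w_{\parallel}$ is several standard deviations below its mean; I would split each expectation and differential entropy into a typical and an atypical part, bound the typical contribution through the expansions, and control the atypical contribution by directly estimating the Gaussian tails of $w_{\parallel}$ and $w_{\perp}$. Convergence in distribution, combined with a uniform integrability argument, then promotes the pointwise statements into genuine convergence of the relevant differential entropies, delivering the required $o(1)$ bound and completing the argument.
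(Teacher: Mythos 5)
Your asymptotic evaluation of the three summands of $\mathcal{G}(R)$ as $R\to\infty$ agrees with the paper's (the limit $-\tfrac{1}{2}\log_2(4\pi^2e^2\sigN)$ and the consolidation of constants into $-\tfrac{1}{2}\log_2(\sigN e^2\alphaU^2(0))$ are both correct), and your remark that $q(r)$, hence $\alphaU(\mu)$ and $\betaU(\mu)$, can be taken at $\mu=0$ is consistent with the paper. However, there is a genuine gap at the heart of the argument: the bound~\eqref{eq:upper_bound_before_ecape_2} contains $\max_{R\geq 0}\mathcal{G}(R)$, and you replace this maximum by $\lim_{R\to\infty}\mathcal{G}(R)$ without justification. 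The limit at infinity is in general only a lower bound on the supremum over $R\geq 0$; if $\mathcal{G}$ attains a strictly larger value at some finite $R$ (which your analysis does not exclude), substituting the limit no longer yields a valid upper bound on capacity. You also misattribute the $o(1)$ term: fixing $\mu=0$ inside a minimization over $\mu$ costs nothing and needs no error term, so the slack you invoke there cannot be the source of the $o(1)$ in the proposition.

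The paper closes exactly this gap with the \emph{escape-to-infinity} property of the capacity-achieving input distribution (\cite[Def.~4.11, Th.~4.12]{lapidoth03-10a}, \cite[Th.~8]{lapidoth06-02a}): imposing the extra constraint $R\geq R_0$ changes the capacity only by $o(1)$ as $\SNR\to\infty$, i.e., $C(\SNR)=C^{R_0}(\SNR)+o(1)$, which is where the $o(1)$ in~\eqref{eq:upper_bound_escape_definition_SNR} actually comes from. Repeating the derivation under this constraint turns $\max_{R\geq 0}\mathcal{G}(R)$ into $\max_{R\geq R_0}\mathcal{G}(R)$, and only then does letting $R_0\to\infty$ legitimately reduce the maximization to the limit you computed. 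Your typical/atypical splitting and uniform-integrability program for the entropy limits is a reasonable substitute for the paper's use of dominated convergence and \cite[Lemma~6.9]{lapidoth03-10a}, but without the escape-to-infinity step the overall chain of inequalities does not establish the proposition.
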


\begin{proof}
Please refer to Appendix~\ref{sec:upper_bound_proof_continued}.
\end{proof}

As we shall see from simulation results in Section~\ref{sec:Results}, the provided upper bound is tight for a wide range of SNR values. 

\section{Capacity Achieving Distributions and the Capacity Lower Bound}\label{sec:lower_bound}
%
One approach to find the capacity lower bound is to restrict the input to have a particular distribution. However, the input distribution must be chosen such that a tight lower bound is obtained. In this section, we present a method to intelligently choose the distribution of input amplitudes, $f(\bR)$. 

We first reconsider the amplitude and phase channel models in \eqref{eq:r_k} and \eqref{eq:theta_k}, which at high SNR, reduce to
\begin{align}
r_k &= \sqrt{(R_k + w_{k,\parallel})^2 + w^2_{k,\perp}} \\
&= \left( R_k + w_{k,\parallel} \right) \sqrt{1 + \frac{w^2_{k,\perp}}{(R_k + w_{k,\parallel})^2}} \\
&\approx R_k + w_{k,\parallel}\label{eq:r_k approx der}\\
\theta_k &=\Theta_k + \arctan \frac{w_{k,\perp}}{R_k+w_{k,\parallel}} + \phi_k\\
&\approx \Theta_k + N_k + \phi_k, \label{eq:theta_k approx der}
\end{align}
where $N_k\triangleq{w_{k,\perp}}/{r_k}$. In (\ref{eq:theta_k approx der}), we used that $\arctan(z) \approx  z$ for small $z$. In the following, we study the channel defined in \eqref{eq:r_k approx der} and \eqref{eq:theta_k approx der}, and derive the capacity-achieving distribution for this simplified channel. Note that in this section, $r_k$, $\theta_k$, and $N_{k}$ refer to the parameters of the approximate channel to avoid defining new variables.

By using the approximate input-output amplitude and phase relations in \eqref{eq:r_k approx der} and \eqref{eq:theta_k approx der}, and by following similar steps that lead to \eqref{eq:info_rate_final}, we obtain

\begin{IEEEeqnarray}{rCL}
&&\hspace{-0.7cm}\frac{1}{n} I(\br,\bt;\bR,\bT)\notag\\ 
&&\hspace{-0.7cm}= \frac{1}{n}\big( I(\br;\bR) - h \left( \{a_k\}_{k=1}^n| \br,\bR \right) \big)+ \log_2 2\pi\label{eq:info_rate_approx_st01}\\
&&\hspace{-0.7cm}= \frac{1}{n}\big( h(\br)- h(\br|\bR) - h \left( \{a_k\}_{k=1}^n| \br\right) \big)+ \log_2 2\pi\label{eq:info_rate_approx_st2}\\
%
%
&&\hspace{-0.7cm}= \frac{1}{n}\big( h(\br)- h(\mathbf{w}_\parallel) - h \left( \{a_k\}_{k=1}^n| \br \right) \big)+ \log_2 2\pi\\
%
%
&&\hspace{-0.7cm}=\frac{1}{n}\big(h(\br) - h \left( \{a_k\}_{k=1}^n| \br \right)\big)\notag\\&&\hspace{3cm}-\frac{1}{2}\log_2 2\pi e \sigN  + \log_2 2\pi, \label{eq:info_rate_final_approx}
\end{IEEEeqnarray}
where $\mathbf{w}_\parallel=\{w_{k,\parallel}\}_{k=1}^n$. In~\eqref{eq:info_rate_approx_st2}, the first term of~\eqref{eq:info_rate_approx_st01} is rewritten by using the definition of mutual information. For the second term, we used that~$\{a_k\}_{k=1}^n$, given~$\br$, is independent of~$\bR$ because of~\eqref{eq:theta_k approx der}. In~(\ref{eq:info_rate_final_approx}), we used~\eqref{eq:r_k approx der} and that of the entropy of the~$n$-dimensional Gaussian-distributed random variable~$\mathbf{w}_\parallel$ is given by \cite[Thm.~8.4.1]{cover06-a}
\begin{align}
h(\mathbf{w}_\parallel)=\frac{n}{2}\log_2 2\pi e \sigN.
\end{align}

Any choice of $f(\bR)$ results in a lower bound on the capacity of the approximate amplitude-phase channel. Note that the infinite memory of the Wiener phase-noise process, $\phi_k$, is resolved by rearranging the noise vector ${\bf N} + {\boldsymbol \phi}$ into the difference between the consecutive noise samples. This results in dependency among $\{a_k\}_{k=1}^{n}$ samples (see~\eqref{eq:differential}-\eqref{eq:differential 2}), and motivates to consider an input distribution that introduces a limited order dependency across the amplitudes of the consecutive symbols. More specifically, we consider \emph{block-independent} input amplitudes and confine the optimization in~\eqref{eq:capacity_polar_def} to the set of input distributions of the form
\begin{align}
\label{eq:pdf_limited_order_dep_input_set}
f(\bR)=\prod_{k=1}^{n/M} f\left({\Rt}^{(k)}\right)=\left(f({\Rt})\right)^{n/M},
\end{align}
where~$\Rt^{(k)}$ are blocks of length-$M>1$ samples obtained by dividing the vector of input amplitudes \footnote{Length $M=1$ is an uninteresting case because $a_n$ depends at least on two consecutive symbols.}
\begin{align}
\hspace{-0.5cm}\bR= \underbrace{\{R_1,\dots,R_M\}}_{\triangleq\Rt^{(1)}},\underbrace{\{R_{M+1},\dots,R_{2M}\}}_{\triangleq\Rt^{(2)}},\dots,
 \underbrace{\{R_{n-M+1},\dots,R_n\}}_{\triangleq\Rt^{(n/M)}}.
\end{align}
The second equality in~\eqref{eq:pdf_limited_order_dep_input_set} follows as~${\Rt}^{(k)}$ are iid from $f(\Rt)$. 
%
%
By substituting~\eqref{eq:info_rate_final_approx} in~\eqref{eq:capacity_polar_def}, and limiting the input distributions to~\eqref{eq:pdf_limited_order_dep_input_set}, we obtain
\begin{align}
C (\SNR) \geq  \lim_{n \to \infty} \underset{f(\Rt)}{\sup} \; \: \left\{\frac{1}{n}\big(h(\br) - h \left( \{a_k\}_{k=1}^n| \br \right)\big)\right\}\notag \\+ \log_2 2\pi-\frac{1}{2}\log_2 2\pi e \sigN.\label{eq:capacity_lower_bound_indep_input}
\end{align}
According to the input-output relation \eqref{eq:r_k approx der}, block-independent input symbols result in block-independent output samples denoted by
\begin{align}
\br= \underbrace{\{r_1,\dots,r_M\}}_{\triangleq\rt^{(1)}},\underbrace{\{r_{M+1},\dots,r_{2M}\}}_{\triangleq\rt^{(2)}},\dots, \underbrace{\{r_{n-M+1},\dots,r_n\}}_{\triangleq\rt^{(n/M)}}.\label{eq:grouped_output}
\end{align}
Thus, a lower bound on the capacity can be found by using~\eqref{eq:capacity_lower_bound_indep_input} and~\eqref{eq:grouped_output} as
\begin{align}
C (\SNR)&\notag\\
&\hspace{-1.4cm}\geq\lim_{n \to \infty} \underset{f(\Rt)}{\sup} \bigg\{\frac{1}{n}\left( h\left(\rt^{(1)},\rt^{(2)},\ldots,\rt^{(n/M)}\right)-  h \left(\{a_k\}_{k=1}^{n}|\br\right)\right)\bigg\}\notag\\ 
&\hspace{2.2cm}-\frac{1}{2}\log_2 2\pi e \sigN + \log_2 2\pi\label{eq:inf lower_bound_2}\\
&\hspace{-1.4cm}=\lim_{n \to \infty} \underset{f(\Rt)}{\sup} \bigg\{\frac{1}{M}h(\rt^{(n/M)})- \frac{1}{n} h \left(\{a_k\}_{k=1}^{n}|\br\right)\bigg\}\notag\\ 
&\hspace{2.2cm}-\frac{1}{2}\log_2 2\pi e \sigN + \log_2 2\pi\label{eq:inf lower_bound_3}\\
&\hspace{-1.4cm}\geq\underset{f(\Rt)}{\sup} \bigg\{\frac{1}{M}h(\rt^{(n/M)})- \lim_{n \to \infty}\frac{1}{n} h \left(\{a_k\}_{k=1}^{n}|\br\right)\bigg\}\notag\\ 
&\hspace{2.2cm}-\frac{1}{2}\log_2 2\pi e \sigN + \log_2 2\pi\label{eq:inf lower_bound_4}\\
&\hspace{-1.4cm}=\underset{f(\Rt)}{\sup} \bigg\{\frac{1}{M}h(\rt^{(n/M)})- \lim_{n \to \infty} h \left(a_n|\{a_k\}_{k=1}^{n-1},\br\right)\bigg\}\notag\\ 
&\hspace{2.2cm}-\frac{1}{2}\log_2 2\pi e \sigN + \log_2 2\pi\label{eq:inf lower_bound_51}\\
&\hspace{-1.4cm}\geq\underset{f(\Rt)}{\sup} \bigg\{\frac{1}{M}h(\rt^{(n/M)})\notag\\&-\lim_{n \to \infty} h \left(a_n|\{a_k\}_{k=n-M+2}^{n-1},\rt^{(n/M)}\right)\bigg\}\notag\\ 
&\hspace{2.2cm}-\frac{1}{2}\log_2 2\pi e \sigN + \log_2 2\pi.\label{eq:inf lower_bound_61}
\end{align}
Here, in \eqref{eq:inf lower_bound_3}, we used that $\rt^{(k)}$ are iid and stationary, and then rewrote the first term inside the supremum. In \eqref{eq:inf lower_bound_4} we swapped the supremum and the limit operations, which resulted in the inequality. Appendix~\ref{sec:sup_lim_swap} describes the steps involved in the swapping. In~\eqref{eq:inf lower_bound_51}, the joint entropy is replaced with an equivalent expression for the differential-entropy rate~\cite[Th.~4.2.1]{cover06-a}.
The inequality in \eqref{eq:inf lower_bound_61} holds because removing conditioning from the second term increases entropy. By using the chain rule of entropy on the second term of~\eqref{eq:inf lower_bound_61}, we obtain
\begin{align}
C (\SNR)
&\geq\underset{f(\Rt)}{\sup}\bigg\{\frac{1}{M}h(\rt)\notag\\
&\quad\quad\quad\quad+h \left( \{a_k\}_{k=n-M+2}^{n-1}|\rt \right)\notag\\
&\quad\quad\quad\quad-h \left( a_n,\{a_k\}_{k=n-M+2}^{n-1}|\rt\right)\bigg\}\notag\\ 
&\quad-\frac{1}{2}\log_2 2\pi e \sigN+ \log_2 2\pi,\label{eq:inf lower_bound_5}
\end{align}
where the superscript of $\rt$, and the limit are removed due to stationarity. According to~\eqref{eq:theta_k approx der}, the second and the third terms on the RHS of \eqref{eq:inf lower_bound_5} correspond to the entropy of two zero-mean Gaussian random vectors with covariance matrices defined as
\begin{IEEEeqnarray}{rCL}
\label{eq:an_covs}
\Sigma_{n-1} &\triangleq 
\text{cov}(\{a_k\}_{k=n-M+2}^{n-1}|\rt)\IEEEyesnumber\IEEEyessubnumber\\
\Sigma_{n}\quad &\triangleq \text{cov}(\{a_k\}_{k=n-M+2}^{n}|\rt)\IEEEyessubnumber.
\end{IEEEeqnarray} 
By using  \eqref{eq:an_covs} and the definition of the entropy for a Gaussian random variable \cite[Thm.~8.4.1]{cover06-a}, \eqref{eq:inf lower_bound_5} can be rewritten as 
\begin{align}
C (\SNR)
&\geq\underset{f(\Rt)}{\sup}\bigg\{
\frac{1}{M}h(\rt)\notag\\
&\quad\quad\quad\quad-\Ex{f(\rt)}{\frac{1}{2}\log_2 2\pi e~g_\br(\rt)}\bigg\}\notag\\ 
&\quad+ \log_2 2\pi-\frac{1}{2}\log_2 2\pi e \sigN,\label{eq:inf lower_bound_6}
\end{align}
where $g_\br(\rt)={|\Sigma_n|}/{|\Sigma_{n-1}|}$.

In order to find a tight lower bound on the capacity, we need to find the supremum in~\eqref{eq:inf lower_bound_6} by searching over all probability distributions on $\Rt$ that satisfy the power constraint~\eqref{eq:power_constraint}. Unfortunately, this optimization is not mathematically tractable in general. However, the search space reduces at high SNR, since~$f(\rt)$ converges to~$f(\Rt)$. Here, determining the optimized~$f(\Rt)$ becomes equivalent to finding the optimized~$f(\rt)$. Therefore, we maximize~\eqref{eq:inf lower_bound_6} over~$f(\rt)$ by means of functional optimization. The optimized~$f(\rt)$ is used as~$f(\Rt)$, which is used to evaluate the lower bound for the capacity of the channel in~\eqref{eq:channel}. The optimization steps needed are described in Appendix~\ref{sec:functional_optimization}, and finally,~$f(\Rt)$ is found as
\begin{align}
f(\Rt) &= \alphaL^{(M)} \left(g_{\br}(\Rt)\right)^{-M/2} e^{-\left(\betaL^{(M)} ||\Rt||^2\right)},\label{eq:pdf lower R}
\end{align}
where the parameters~$\alphaL^{(M)}$ and~$\betaL^{(M)}$ are chosen such that the following constraints are satisfied. The first constraint is based on the fact that $f(\Rt)$ is a probability distribution function and, hence, must integrate to one
\begin{align}
\int_0^\infty f(\Rt) \text{d}\Rt= 1 \label{eq:L_L1_R}.
\end{align}
The second constraint is due to the input power constraint \eqref{eq:power_constraint} 
\begin{align}
\int_0^\infty ||\Rt||^2 f(\Rt) \text{d}\Rt &= M\Es .\label{eq:L_L2_R}
\end{align}
%
%
%
To numerically compute $\alphaL^{(M)}$ and~$\betaL^{(M)}$, a method similar to that presented in Appendix~\ref{sec:numerical_calc_alpha_beta} is used. Tab.~\ref{tab:lower_bound} contains the calculated values of these parameters in various scenarios. Note that the superscript $(M)$ shows the dependency of these parameters on the dimension of~$\Rt$. 

In the numerical-result section, we will see that choosing the input distribution as proposed in~\eqref{eq:pdf lower R} results in a tight lower bound on the capacity for a wide range of SNR values and phase noise innovation variances. 
\begin{table}[t]
\caption{Numerically calculated values of $\alphaL$ and $\betaL$, for~$\Es=1$, $\sigd=10^{-3}$ and various $\sigN$.}
\label{tab:lower_bound}
\centering
\renewcommand\arraystretch{1.3}
\begin{tabular}{|c|c|c|c|c|c|}
\cline{1-5}
$\sigN$& $\alphaL^{(2)}$ & $\betaL^{(2)}$ & $\alphaL^{(3)}$ & $\betaL^{(3)}$\\ \cline{1-5}
$5\times 10^{-2}$ & $0.509$ & $0.997$ & $0.12500$ & $0.991$\\ \cline{1-5}
$5\times 10^{-3}$ & $0.051$ & $0.967$ & $0.00400$ & $0.936$\\ \cline{1-5}
$5\times 10^{-4}$ & $0.006$ & $0.825$  & $0.00020$ & $0.756$\\ \cline{1-5}
$5\times 10^{-5}$ & $0.001$ & $0.634$ & $0.00003$ & $0.598$\\ \cline{1-5}
$5\times 10^{-6}$ & $0.001$ & $0.544$ & $0.00002$ & $0.533$\\ \cline{1-5}
\end{tabular}
\end{table}
%

In the following, we evaluate $f(\Rt)$ for ~$M=2$ and~$M=3$, which will be used in our numerical simulations. For $M=2$, the argument of the limit in \eqref{eq:inf lower_bound_61} is equal to~$h \left( a_n|\rt \right)$. Hence,  
\begin{align}
\label{eq:2d_gr}
g_{\br}(\rt) = |\Sigma_n|&=\text{cov}(a_n|\rt)\notag\\
&=\sigd + \frac{\sigN}{r^2_{n}} + \frac{\sigN}{r^2_{n-1}}.
 \end{align}
By using \eqref{eq:pdf lower R} and \eqref{eq:2d_gr}, we obtain the  two-dimensional input distribution as
\begin{align}
f(R_n,R_{n-1}) &= \alphaL^{(2)} \frac{e^{-\left(\betaL^{(2)} (R_n^2+R_{n-1}^2)\right)}}{\sigd + \frac{\sigN}{R^2_{n}} + \frac{\sigN}{R^2_{n-1}}} . \label{eq:pdf lower R 2D}
\end{align}

For $M=3$, the covariance matrices in \eqref{eq:an_covs} can be computed as
\begin{align}
\label{eq:an_covs_3D}
&\Sigma_{n-1} =\text{cov}(a_{n-1}|\rt)=\sigd + \frac{\sigN}{r^2_{n-1}} + \frac{\sigN}{r^2_{n-2}},\\
&\Sigma_{n}\quad =\text{cov}(a_n,a_{n-1}|\rt)\\
&\qquad~=\begin{pmatrix}
\sigd + \frac{\sigN}{r^2_n} + \frac{\sigN}{r^2_{n-1}} & \frac{\sigN}{r^2_{n-1}} \\
\frac{\sigN}{r^2_{n-1}} & \sigd + \frac{\sigN}{r^2_{n-1}} + \frac{\sigN}{r^2_{n-2}} \\
\end{pmatrix}
.
\end{align}
By using \eqref{eq:pdf lower R} we obtain a three-dimensional input distribution
\begin{align}
f(R_n,R_{n-1},R_{n-2}) &= \alphaL^{(3)}  \frac{e^{-\left(\betaL^{(3)} (R_n^2+R_{n-1}^2+R_{n-2}^2)\right)}}{\left(g_{\br}(\Rt)\right)^{3/2}}, \label{eq:pdf lower R 3D}
\end{align}
where 
\begin{align}
g_{\br}(\Rt)
&=\sigd + \frac{\sigN}{R^2_n} + \frac{\sigN}{R^2_{n-1}}-\frac{\frac{\sigma^4_\text{W}}{R^4_{n-1}}}{\sigd + \frac{\sigN}{R^2_{n-1}} + \frac{\sigN}{R^2_{n-2}}}.
\label{eq:g_R_3D}
\end{align}

\section{Numerical Results}
\label{sec:Results}
\begin{figure}[t]
\begin{center}
\psfrag{LPND}[][][0.8]{$10\log(1/\sigd)$}
\psfrag{lUpBound}[][][0.8]{$C_\text{U}$~(Eq.~\ref{eq:upper_bound_before_ecape_2})}
\psfrag{lEUpBound}[][][0.8]{$C_{\tilde{\text{U}}}$~(Eq.~\ref{eq:final_upBound})}
\psfrag{Lawgn}[][][0.8]{$C_\text{AWGN}$~(Eq.~\ref{eq:AWGN_capacity})}
\psfrag{LLapidoth}[][][0.8]{$C_\text{Lapidoth}$~(Eq.~\ref{eq:Lapidoth_capacity})}
\psfrag{LBound2}[][][0.6]{Lower Bound $M=2$}
\psfrag{LBound3}[][][0.6]{Lower Bound $M=3$}
\includegraphics[width=3.4in]{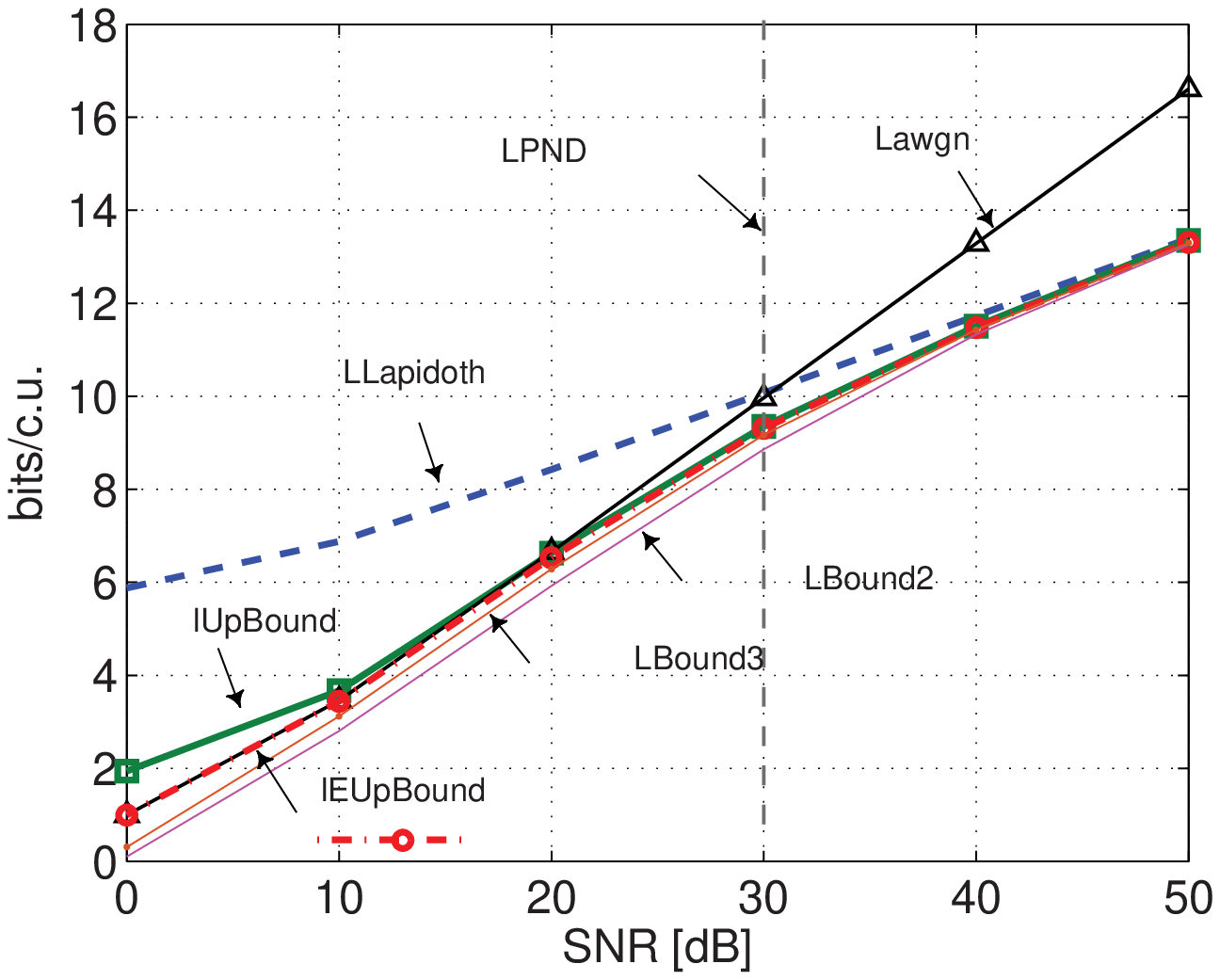}
\caption{The proposed capacity bounds vs. the capacity of the AWGN channel and the asymptotic high-SNR capacity of the phase-noise channel from \cite{lapidoth02-10a}. Here,  $\sigd=10^{-3}~[\text{rad}^2]$.}
\label{fig:UpBound1e-3}
\end{center}
\end{figure}

\begin{figure}[t]
\begin{center}
\psfrag{LPND}[][][0.8]{$10\log(1/\sigd)$}
\psfrag{lUpBound}[][][0.8]{$C_\text{U}$~(Eq.~\ref{eq:upper_bound_before_ecape_2})}
\psfrag{lEUpBound}[][][0.8]{$C_{\tilde{\text{U}}}$~(Eq.~\ref{eq:final_upBound})}
\psfrag{Lawgn}[][][0.8]{$C_\text{AWGN}$~(Eq.~\ref{eq:AWGN_capacity})}
\psfrag{LLapidoth}[][][0.8]{$C_\text{Lapidoth}$~(Eq.~\ref{eq:Lapidoth_capacity})}
\psfrag{LBound2}[][][0.6]{Lower Bound $M=2$}
\psfrag{LBound3}[][][0.6]{Lower Bound $M=3$}
\includegraphics[width=3.4in]{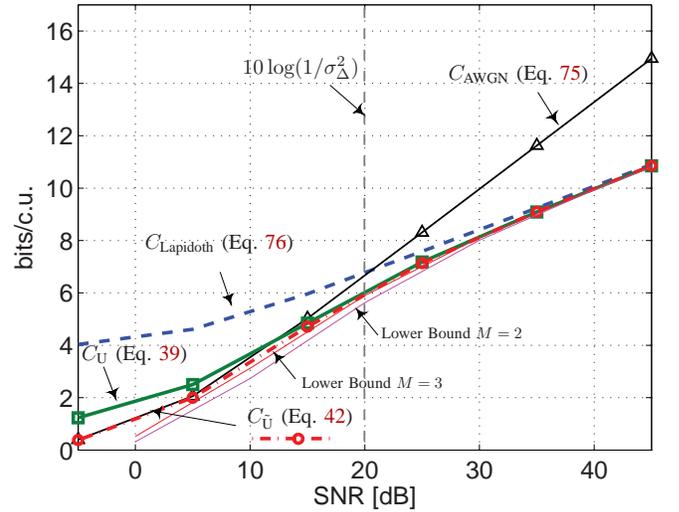}
\caption{The proposed capacity bounds vs. the capacity of the AWGN channel and the asymptotic high-SNR capacity of the phase-noise channel from \cite{lapidoth02-10a}. Here,  $\sigd=10^{-2}~[\text{rad}^2]$.}
\label{fig:UpBound1e-2}
\end{center}
\end{figure}

%

In this section, we first evaluate the proposed upper bounds~\eqref{eq:upper_bound_before_ecape_2} and~\eqref{eq:upper_bound_escape_definition_SNR}. Then, we use the derived input distribution~\eqref{eq:pdf lower R} to find lower bounds on the capacity through numerical simulations. We compare the bounds with the capacity of the 
AWGN channel \cite[Ch.~9]{cover06-a}
\begin{align}
\label{eq:AWGN_capacity}
C_\text{AWGN}(\SNR)=\log_2\left(1+\frac{\Es}{2\sigN}\right),
\end{align}
and the asymptotic high-SNR capacity of the Wiener phase-noise channel derived by Lapidoth in \cite{lapidoth02-10a}, 
\begin{align}
\label{eq:Lapidoth_capacity}
C_\text{Lapidoth}(\SNR)=\frac{1}{2}\log_2\left(1+\frac{\Es}{4\sigN}\right)-\frac{1}{2}\log_2\left(\frac{ e \sigd}{2\pi}\right).
\end{align}

Figs.~\ref{fig:UpBound1e-3} and~\ref{fig:UpBound1e-2} illustrate the upper bounds~\eqref{eq:upper_bound_before_ecape_2} and \eqref{eq:final_upBound}, and simulated lower bounds for different SNR values. In Fig.~\ref{fig:UpBound1e-3}, the phase noise innovation variance is  $\sigd=10^{-3}$, while it is $\sigd=10^{-2}$ in Fig.~\ref{fig:UpBound1e-2}. 

The upper bound~$C_\text{U}(\SNR)$ in~\eqref{eq:upper_bound_before_ecape_2} is calculated by means of Monte-Carlo simulations. More specifically, we compute $\mathcal{G}(R)$ in~\eqref{eq:definition_of_g_before_escape} for given values of $R$ and $\mu$, by drawing samples from~$w_{\parallel}$,~$w_{\perp}$, and~$\Delta$.
The first term of $\mathcal{G}(R)$ is a double integral, computed numerically. The differential entropy terms of $\mathcal{G}(R)$ are estimated by using the~\emph{nearest neighbor estimator}~\cite{beirlant1997nonparametric}. 

To evaluate the upper bound in~\eqref{eq:upper_bound_escape_definition_SNR}, we omit the~$o(1)$ term and plot~$C_{\tilde{\text{U}}}(\SNR)$ from \eqref{eq:final_upBound}. Figs.~\ref{fig:UpBound1e-3} and~\ref{fig:UpBound1e-2} show that the this asymptotic upper bound expression~$C_{\tilde{\text{U}}}(\SNR)$ matches~$C_\text{U}(\SNR)$ for SNR values around $10$~dB and above.

It can be seen from Figs.~\ref{fig:UpBound1e-3} and~\ref{fig:UpBound1e-2} that the upper bound~$C_\text{U}(\SNR)$ is not tight for SNRs below~$10$~dB (because it exceeds the AWGN capacity). An alternative upper bound, tighter than $C_\text{U}(\SNR)$, is $\min\{C_\text{U}(\SNR),C_\text{AWGN}(\SNR)\}$. We observe from the figures that the asymptotic upper bound~$C_{\tilde{\text{U}}}(\SNR)$ is a very accurate approximation of this alternative bound.

In general, at low SNR, the capacity upper bound approaches the AWGN capacity~\eqref{eq:AWGN_capacity} because the AWGN dominates over the phase noise. At high SNR, where the phase noise dominates, the derived upper bounds follow the high-SNR capacity of the Wiener phase-noise channel \eqref{eq:Lapidoth_capacity}. It can be seen from Figs.~\ref{fig:UpBound1e-3} and~\ref{fig:UpBound1e-2} that phase noise starts to dominate for SNR values larger than approximately $10\log(1/\sigd)$~dB. The exact value of this point can also be analytically found by intersecting \eqref{eq:AWGN_capacity} and \eqref{eq:Lapidoth_capacity}.


In order to numerically find lower bounds on the capacity of the channel~\eqref{eq:channel}, the sum-product algorithm proposed in~\cite{arnold06-08a} for calculation of the information rate of channels with memory is used. We specifically use the particle-based implementation of this method, which is introduced in~\cite{dauwels2008computation}. First, we use the rejection sampling approach~\cite{robert2004monte} to draw samples from \eqref{eq:pdf lower R} for the input amplitudes. For the phase of the input samples, we use Lemma~\ref{th:uniform_pahse}, and draw independent samples from~$\mathcal{U}(0,2\pi)$. The generated input samples are transmitted over the original channel \eqref{eq:channel}, and the achievable information rate is computed as explained in~\cite{dauwels2008computation}. 
In figs.~\ref{fig:UpBound1e-3} and~\ref{fig:UpBound1e-2}, the simulated lower bounds with~$M=2$ and~$M=3$ are compared against the proposed upper bounds. The particle-based method of~\cite{dauwels2008computation} with~$10^7$ particles over~$10^3$ channel uses is employed. It can be seen that the computed lower bounds are close to the upper bound for a wide range of SNR values. In particular, the input with a higher order of dependency of amplitudes $(M=3)$ results in a tighter lower bound for lower SNRs. 

Fig.~\ref{fig:UpBoundvsSd} shows $C_{\tilde{\text{U}}}(\SNR)$ for different values of phase noise innovation variance, $\sigd$. As expected, the bound approaches the capacity of the AWGN channel without phase noise when $\sigd << \sigN$.
\section{Conclusion}
In this paper, we presented methods to develop tight upper and lower bounds on the capacity of the Wiener phase-noise channel. A capacity upper bound, tighter than that of available in the literature, is derived. We also derived analytical expressions for a family of input distributions, which result in tight lower bounds on the capacity. The proposed input distributions are circularly symmetric and non-Gaussian. Moreover, the input amplitudes are correlated over time. The proposed upper and lower bounds tightly enclose the channel capacity for a wide range of SNR values. The proposed bounds reach the AWGN capacity at low SNR. In the limiting regime of high SNR, only the amplitude of the transmitted signal can be perfectly recovered, whereas the phase is lost. Therefore in that regime, by increasing the SNR gains in capacity can only be achieved through the amplitude channel. 
\begin{figure}[t]
\begin{center}
\psfrag{lUpBound4}[][][0.8]{$C_{\tilde{\text{U}}}$~for~$\sigd=10^{-4}$}
\psfrag{lUpBound3}[][][0.8]{$C_{\tilde{\text{U}}}$~for~$\sigd=10^{-3}$}
\psfrag{lUpBound2}[][][0.8]{$C_{\tilde{\text{U}}}$~for~$\sigd=10^{-2}$}
\psfrag{awgn}[][][0.8]{$C_\text{AWGN}$~(Eq.~\ref{eq:AWGN_capacity})}
\includegraphics[width=3.2in]{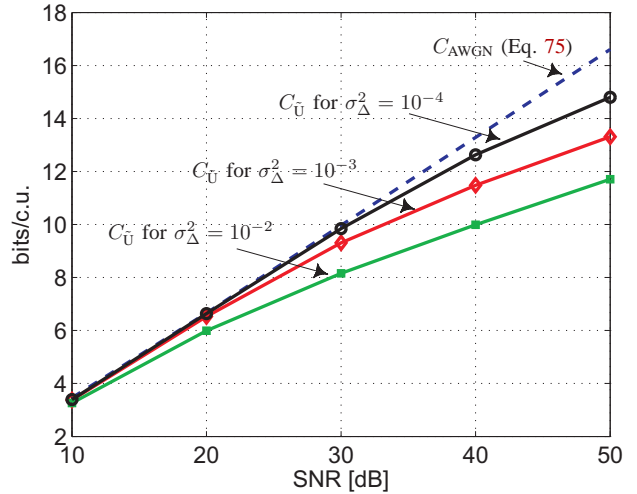}
\caption{The proposed capacity upper bound for various $\sigd$, compared against the AWGN capacity.}
\label{fig:UpBoundvsSd}
\end{center}
\end{figure}
%



\begin{appendices}
\section{Numerical Calculation of \texorpdfstring{$\alphaU(\mu)$}{Lg} and \texorpdfstring{$\betaU(\mu)$}{Lg}}
\label{sec:numerical_calc_alpha_beta}
In this section, we present the numerical method that is used for the calculation of $\alphaU(\mu)$ and $\betaU(\mu)$. From~\eqref{eq:L_U1} and~\eqref{eq:L_U2}, we obtain
\begin{align}
\label{eq:appendix_alpha_beta_int_ratio}
\frac{\int_0^\infty r^2 q(r) \text{d}r}{\int_0^\infty q(r) \text{d}r}= \Es + 2\sigN,
\end{align}
where the left hand side of~\eqref{eq:appendix_alpha_beta_int_ratio} is independent of $\alphaU(\mu)$ based on~\eqref{eq:pdf upper}. To find~$\betaU(\mu)$ that satisfies~\eqref{eq:appendix_alpha_beta_int_ratio}, we use the bisection method. In each iteration of the algorithm, the integrals involved in \eqref{eq:appendix_alpha_beta_int_ratio} are numerically calculated. After finding~$\betaU(\mu)$,~\eqref{eq:L_U1} is used to compute~$\alphaU(\mu)$.
\section{Proof of Proposition~\ref{prop:upper_bound}}
\label{sec:upper_bound_proof_continued}
To further simplify the upper bound in~\eqref{eq:upper_bound_before_ecape_2}, we use the \emph{escape-to-infinity} property of the capacity-achieving input distribution~\cite[Def.~4.11]{lapidoth03-10a}. We impose the additional constraint~$R\geq R_0$ on the input distribution, where~$R_0>0$, and denote the capacity of the channel \eqref{eq:channel} as~$C^{R_0}(\SNR)$. By following the same procedure as in~\cite[Th.~4.12]{lapidoth03-10a} and~\cite[Th.~8]{lapidoth06-02a}, we can show that 
\begin{align}\label{eq:escape_to_infinity}
C(\SNR) = C^{R_0}(\SNR) + o(1),\quad \SNR\to\infty,
\end{align}
where $o(1)$ denotes a function that vanishes as SNR grows large. This means that the high-SNR behavior of $C(\SNR)$ does not change if the input distribution is constrained to lie outside a sphere of an arbitrary radius. Considering this new constraint and repeating the steps leading to \eqref{eq:upper_bound_before_ecape_2}, we obtain 
\begin{align}
C^{R_0}(\SNR) \leq \min_{\mu\geq 0}\Big\{
-\log_2\left(\frac{\alphaU(\mu)}{2\pi}\right)&+\frac{\betaU(\mu)}{\ln(2)}(\Es + 2\sigN)\notag\\&+\max_{R\geq R_0} \mathcal{G}(R)\Big\},\label{eq:upper_bound_ecape_0}  
\end{align}
where~$\mathcal{G}(R)$ is defined in~\eqref{eq:definition_of_g_before_escape}. 
By choosing $R_0$ to be arbitrary large, we can evaluate $\mathcal{G}(R)$ when $R\to \infty$. The first term is found as
\begin{align}
&\hspace{-0.4cm}\lim_{R\to \infty}\frac{1}{2}\Ex{f(w_{\parallel},w_{\perp})}{\log_2\left(\frac{\sigN}{\left(\sqrt{(R + w_{\parallel})^2 + w^2_{\perp}}+\mu\right)^2}+\sigd\right)}\notag\\&\hspace{5.2cm}=\frac{1}{2}\log_2\left(\sigd\right).\label{eq:func_g_zero}
\end{align}
The equality in~\eqref{eq:func_g_zero} follows from the dominated convergence theorem that permits interchange of the limit and the expectation operators.~\footnote{It is straightforward to show that the dominated convergence theorem holds by choosing  an integrable dominated (majorizer) function such as~$\log_2\left({\sigN}/{w^2_{\perp}}+\sigd\right)$ over the function inside the expectation.}

For the second term of $\mathcal{G}(R)$ we obtain
\begin{align}
&\lim_{R\to \infty} h\left(\sqrt{(R + w_{\parallel})^2 + w^2_{\perp}}\right)\notag\\
&=\lim_{R\to \infty} h\left(R + w_{\parallel}+O\left(\frac{1}{R}\right)\right)\label{eq:func_g_first_1}\\
&=\lim_{R\to \infty} h\left(R + w_{\parallel}\right)\label{eq:func_g_first_2}\\
&=\frac{1}{2}\log_2 2\pi e \sigN \label{eq:func_g_first_3}.
\end{align}
In \eqref{eq:func_g_first_1}, we have written the power series of the function inside the differential entropy about $R=\infty$. The $O(1/R)$ term represents omitted terms of order $1/R$. In \eqref{eq:func_g_first_2}, we used~\cite[Lemma.~6.9]{lapidoth03-10a}
\begin{align}
	&\lim_{\epsilon\to 0} h(a+\epsilon b)=h(a).\label{eq:entropy_plus_epsilon}
\end{align}
Finally, in \eqref{eq:func_g_first_3}, we used that translation does not change  differential entropy. Then, we used the entropy of a Gaussian distributed random variable~\cite[p.~244]{cover06-a}.
%
%

Similarly, for the third term of~$\mathcal{G}(R)$, we obtain
\begin{align}
\lim_{R\to \infty}h\left(\arctan \frac{w_{\perp}}{R+w_{\parallel}} +\Delta \big | r \right)&=h\left(\Delta\right)\\
&=\frac{1}{2}\log_2 2\pi e \sigd.\label{eq:func_g_second}
\end{align}
Here in \eqref{eq:func_g_second}, we again used \eqref{eq:entropy_plus_epsilon}, and the fact that 
\begin{align}
\lim_{R\to \infty} \arctan \frac{w_{\perp}}{R+w_{\parallel}}=0. 
\end{align}

Substituting \eqref{eq:func_g_zero}, \eqref{eq:func_g_first_3}, and \eqref{eq:func_g_second} in~\eqref{eq:upper_bound_ecape_0}, we obtain
\begin{align}
C^{R_0}(\SNR) \leq \min_{\mu\geq 0}\Big\{\frac{\betaU(\mu)}{\ln(2)}(\Es +2\sigN)& \notag\\&\hspace{-1.2cm}- \frac{1}{2}\sigN e^2\log_2\alphaU^2(\mu)\Big\}.
\label{upper_bound_escape_min}
\end{align}
From~\eqref{eq:pdf upper} we observe that~$q(r)$ becomes asymptotically independent of~$\mu$ when~$R\to \infty$, and thus become~$\alphaU(\mu)$ and $\betaU(\mu)$. Hence in this case,~$\alphaU(\mu)$ and~$\betaU(\mu)$ can be found for any arbitrary~$\mu$, and the minimization in~\eqref{upper_bound_escape_min} can be omitted. Without loss of generality, we set $\mu=0$, and substitute~\eqref{upper_bound_escape_min} in \eqref{eq:escape_to_infinity}
\begin{align}
&C(\SNR)\leq \frac{\betaU(\mu=0)}{\ln(2)}(\Es +2\sigN) \notag \\
&\hspace{2.5cm}- \frac{1}{2}\log_2\sigN e^2\alphaU^2(\mu=0)+ o(1).
\end{align}
This concluded the proof of Proposition~\ref{prop:upper_bound}.

%
\section{Swapping Supremum and Limit Operations}
\label{sec:sup_lim_swap}
In this section, we describe the steps involved in swapping the supremum and the limit operations in~\eqref{eq:inf lower_bound_4}. We first define 
\begin{align}\label{eq:swap_def1}
G(f(\Rt),n) \triangleq  \frac{1}{M}h(\rt^{(n/M)})-\frac{1}{n} h \left(\{a_k\}_{k=1}^{n}|\br\right).
\end{align}
For each fixed~$f^*(\Rt)$, $\underset{f(\Rt)}{\sup} G(f(\Rt),n) \geq G(f^*(\Rt),n)$. If~$\lim_{n \to \infty}G(f(\Rt),n)$ exists, we obtain  
\begin{align}\label{eq:swap_lim_1}
\lim_{n \to \infty} \underset{f(\Rt)}{\sup} G(f(\Rt),n)\geq \lim_{n \to \infty} G(f^*(\Rt),n).
\end{align}
Since~\eqref{eq:swap_lim_1} holds for any~$f^*(\Rt)$, the supremum also satisfies the inequality, thus
\begin{align}\label{eq:swap_final}
\lim_{n \to \infty} \underset{f(\Rt)}{\sup} G(f(\Rt),n)\geq \underset{f(\Rt)}{\sup} \lim_{n \to \infty} G(f(\Rt),n).
\end{align}
\section{Optimized Input Distribution}
\label{sec:functional_optimization}
\subsection{Background on Functional Optimization}
\label{sec:background_functional_optimization}
Definition of the capacity stated in \eqref{eq:capacity def} is a functional optimization problem~\cite{byron1992mathematics}. A functional is a mapping from a function to a real number, e.g. an integral. A background on \emph{calculus of variations} \cite{byron1992mathematics} is presented here that later will be used in our analysis.

Consider the functional $F(u)$, defined as
\begin{align}\label{eq:functional_K}
F(u) = \int_\Omega K(\bx,u(\bx)) \, \text{d}\bx,
\end{align}
where $u(\bx)$ is a real valued function of a real vector argument $\bx$,
\begin{align}
u: \Omega \subset \mathbb{R}^N \rightarrow \mathbb{R}.
\end{align}
A necessary condition for $u$ to be a stationary point of $F(u)$ under the $m$ constraints
\begin{align}\label{eq:functional_L}
\int_\Omega L_i(\bx,u(\bx)) \, \text{d}\bx = 0 , \: i=1,2,...,m \, ,
\end{align}
is that the following simplified Euler-Lagrange equation is satisfied
\begin{align}
\frac{\partial K}{\partial u} + \sum \limits_{i=1}^m \lambda_i \frac{\partial L_i}{\partial u}  = 0. \label{eq:EL}
\end{align}
The Lagrange multipliers $\lambda_i$ should be chosen to fulfill the constraints. 
Note that, $K$ and $L_i$, $i=1,2,...,m$, are real valued functions with continuous first partial derivatives.
%
\subsection{Functional Optimization of the Lower Bound}
We perform the supremum by employing the functional optimization method reviewed in Appendix~\ref{sec:background_functional_optimization}. Based on \eqref{eq:inf lower_bound_6}, the functional that must be maximized is written as
\begin{align}
F(f(\rt))=\int_0^\infty \Big[&-\frac{1}{M}f(\rt)\log(f(\rt)) \notag\\&-f(\rt) {\frac{1}{2}\log_2 2\pi e~g_\br(\rt)}\notag\\&+ \log_2 2\pi-\frac{1}{2}\log_2 2\pi e \sigN\Big] \text{d}\rt. \label{eq:capacity_upper_bound_R_functional}
\end{align}

Using \eqref{eq:capacity_upper_bound_R_functional}, the function corresponding to $K$ in \eqref{eq:functional_K} is identified as
\begin{align}\label{eq:K_u_lowerBound}
K(\rt,f(\rt)) &= -\frac{1}{M}f(\rt)\log_2 (f(\rt){(g_\br(\rt))^{M/2}}) -\frac{1}{2}\log_2 \sigN e^2.
\end{align}
There are two constraints on $f(\rt)$ that must be satisfied. First,~$f(\rt)$ has to integrate to one. The second constraint can be formulated based on the average power constraint of the input distribution. Consequently, for $L_i$ in \eqref{eq:functional_L}, we obtain 
\begin{IEEEeqnarray}{rCL}
\label{eq:L_U_lowerBound}
L_1(\rt,f(\rt)) &=& f(\rt) - 1 \label{eq:L_U1_lowerBound}\IEEEyesnumber\IEEEyessubnumber\\
L_2(\rt,f(\rt)) &=& ||\rt||^2 f(\rt) - M(\Es + \sigN).\label{eq:L_U2_lowerBound}\IEEEyessubnumber
\end{IEEEeqnarray}
Finally, by substituting \eqref{eq:K_u_lowerBound} and \eqref{eq:L_U_lowerBound} into the Euler-Lagrange equation \eqref{eq:EL}, we obtain the output distribution that maximizes \eqref{eq:inf lower_bound_6} as 
\begin{align}
f(\rt) &= \alphaL^{(M)} (g_\br(\rt))^{M/2} e^{-(\betaL^{(M)} ||\rt||^2)}. \label{eq:pdf lower}
\end{align}

As mentioned before, the optimized~$f(\rt)$ is used as~$f(\Rt)$ to evaluate the lower bound for the capacity of the channel in~\eqref{eq:channel}.
\end{appendices}

\section*{ACKNOWLEDGMENT}
Discussions with Ashkan Panahi are gratefully
acknowledged.
\bibliographystyle{IEEEtran}
\bibliography{IEEEabrv,references}
\begin{IEEEbiography}[{\includegraphics[width=1in,height=1.25in,clip,keepaspectratio]{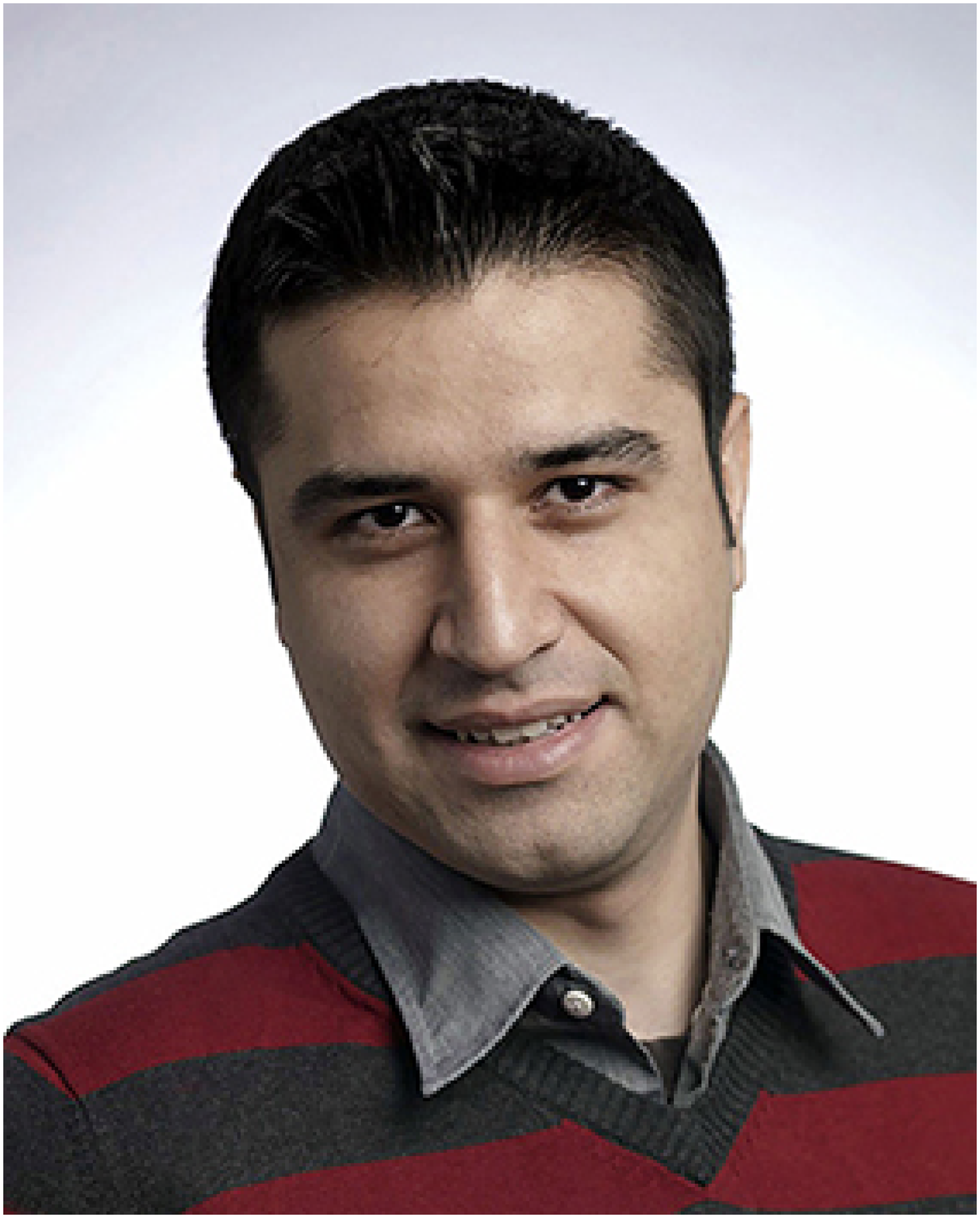}}]{M. Reza Khanzadi}(S'10)
received his M.Sc. degree in communication engineering from Chalmers University of Technology, Gothenburg, Sweden, in 2010. He is currently a Ph.D. candidate at the Department of Signals and Systems in collaboration with the Department of Microtechnology and Nanoscience of the same university. From October to December 2014, he was a Research Visitor in the University of Southern California, Los Angeles, CA. 
Bayesian inference, statistical signal processing, and information theory are his current research interests. His Ph.D. project is mainly focused on radio frequency oscillator modeling, oscillator phase noise estimation/compensation, and determining the effect of oscillator phase noise on the performance of communication systems. 
He has been the recipient of the S2 Pedagogical Prize 2012 from Department of Signals and Systems, as well as 2013, 2014 and 2015 Ericsson's Research Foundation grants, 2014 Chalmers Foundation grant, and 2014 Gothenburg Royal Society of Arts and Sciences grant. 
\end{IEEEbiography}

\begin{IEEEbiography}[{\includegraphics[width=1in,height=1.25in,clip,keepaspectratio]{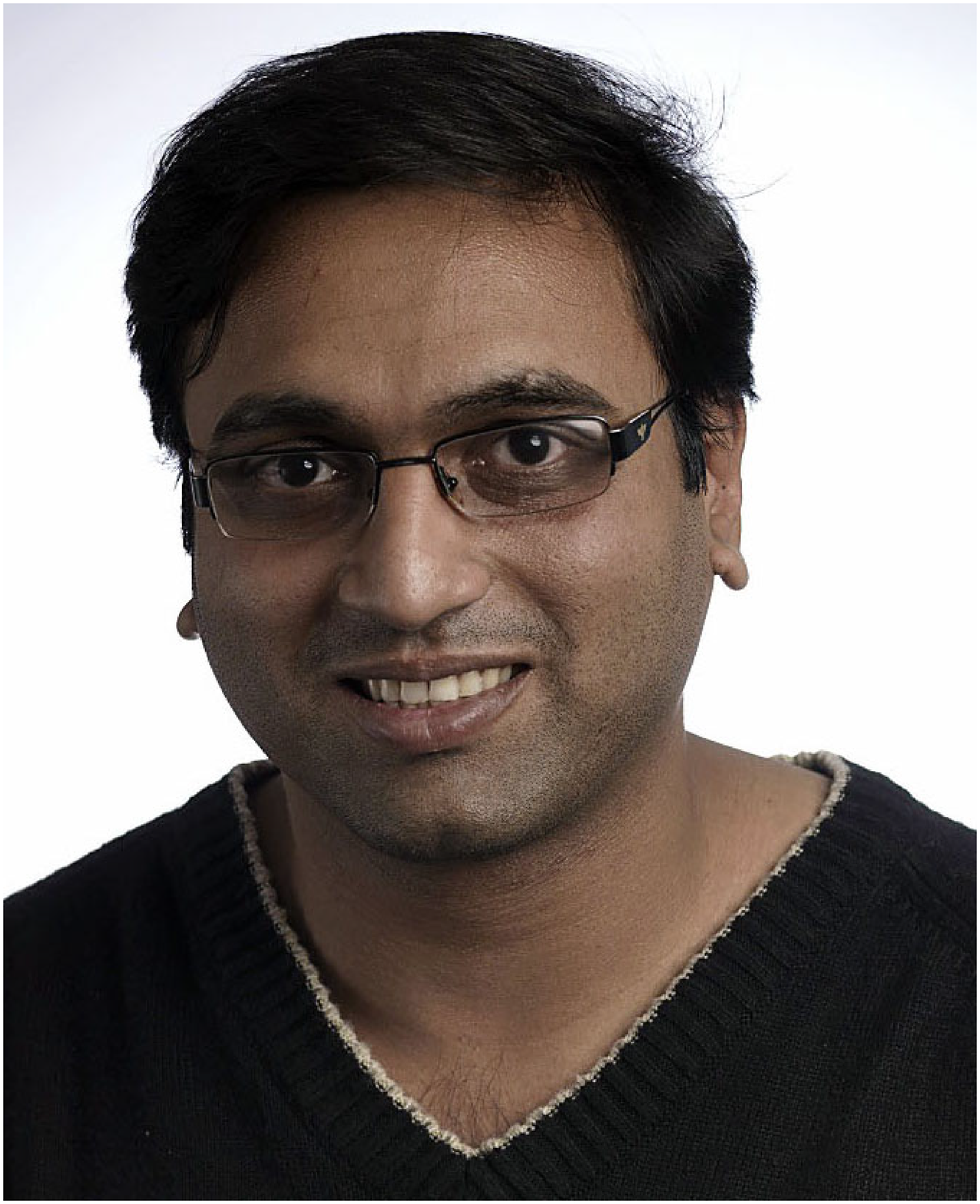}}]{Rajet Krishnan} completed his B.Sc. degree in electronics and communication engineering from College of Engineering Trivandrum India in 2004, his M.Sc. degree in electrical and computer engineering from Kansas State University, USA in 2009, and his Ph.D. in communication systems from Chalmers University of Technology, Sweden in 2015.  His research interests include communication theory, statistical signal processing, Bayesian Inference, information theory, and machine learning.
\end{IEEEbiography}

\begin{IEEEbiography}[{\includegraphics[width=1in,height=1.25in,clip,keepaspectratio]{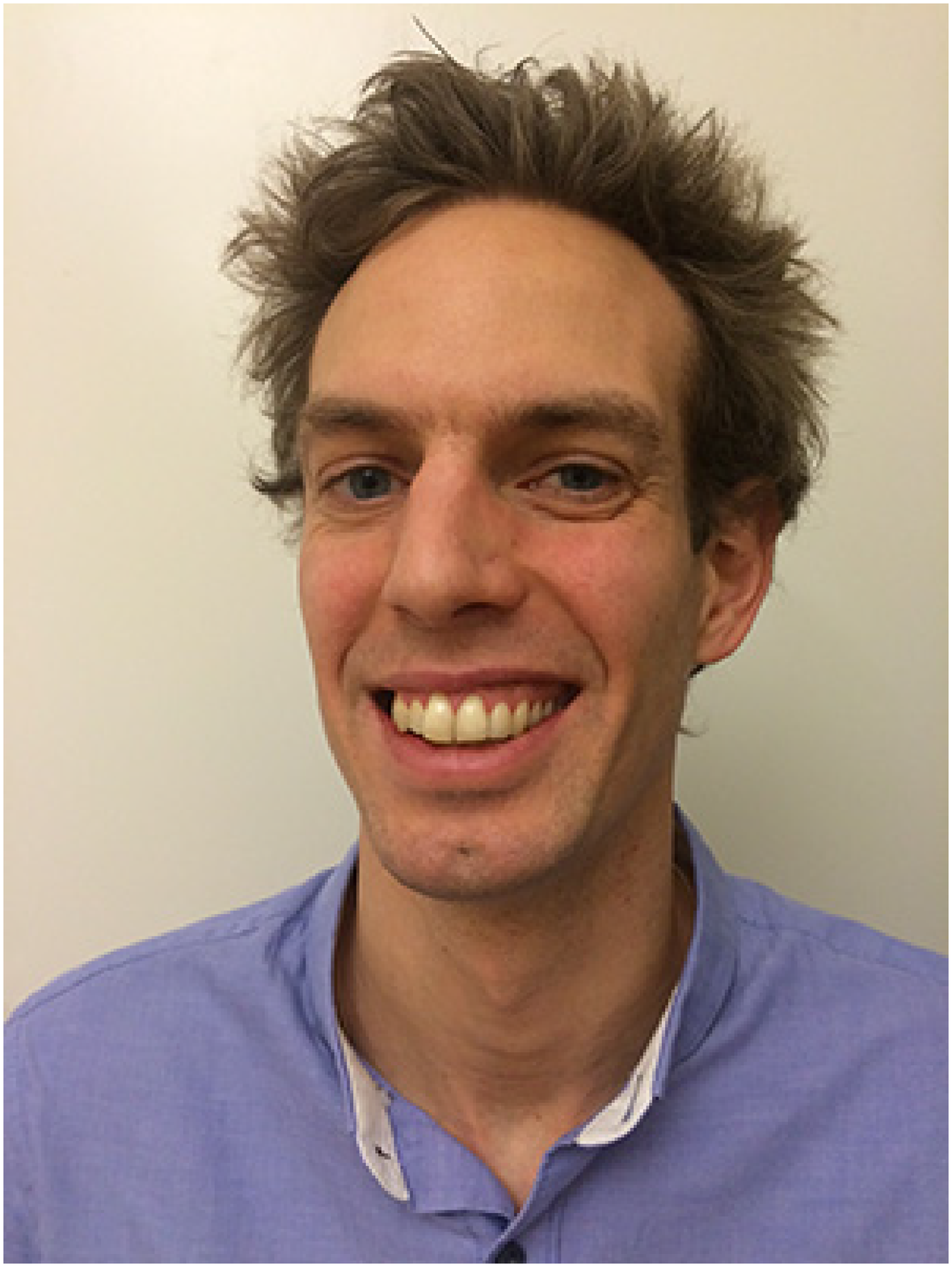}}]{Johan S\"{o}der} received the B.Sc. degree in engineering physics in 2007 and the M.Sc. degree in communication engineering in 2010, both from Chalmers University of Technology. He joined Ericsson in 2010 working with the LTE product development. Since 2012 he has worked within Ericsson Research focusing on performance evaluations and algorithm development for wireless access networks.
\end{IEEEbiography}

\begin{IEEEbiography}[{\includegraphics[width=1in,height=1.25in,clip,keepaspectratio]{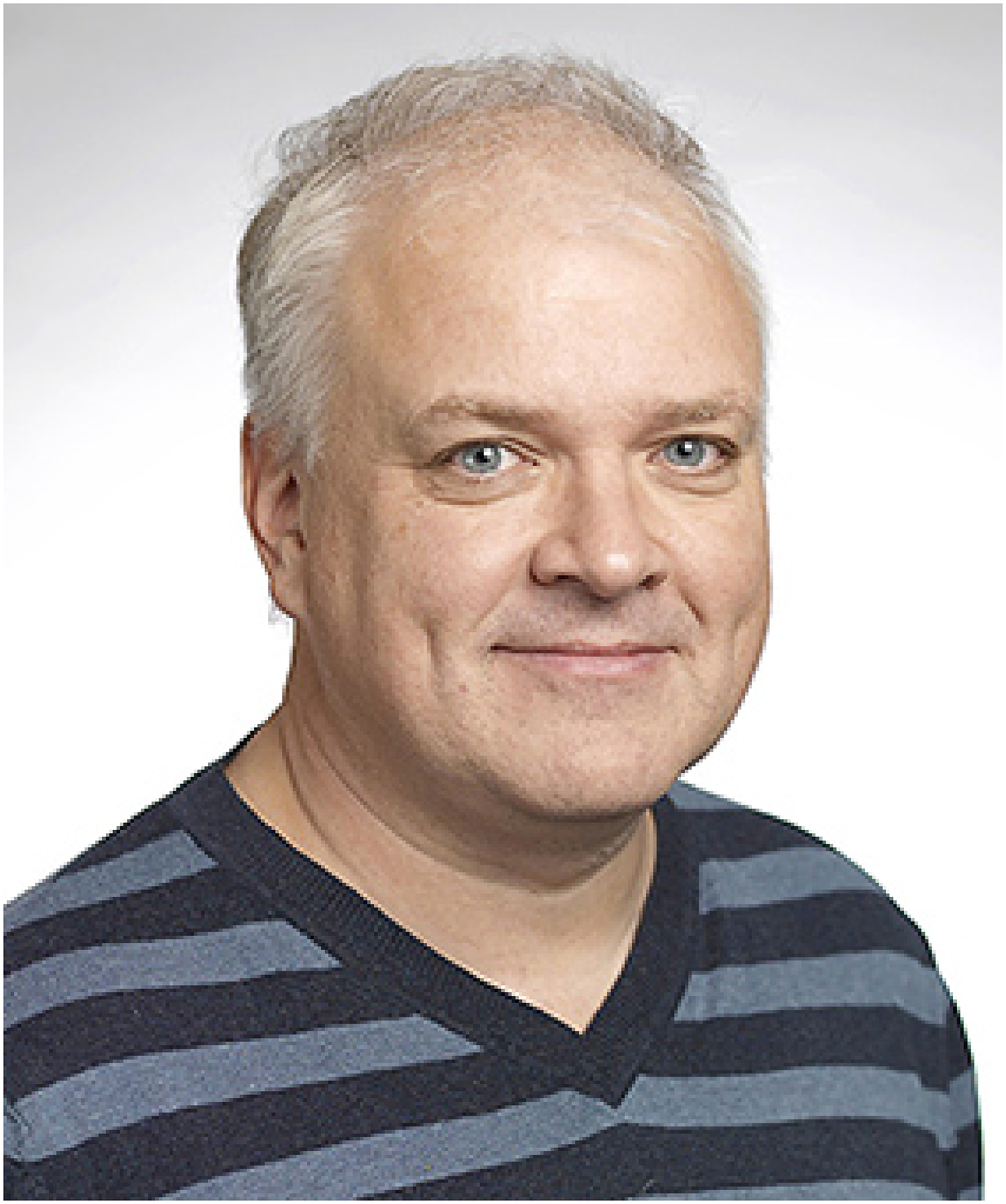}}]{Thomas Eriksson} received the Ph.D. degree in Information Theory in 1996, from Chalmers University of Technology, Gothenburg, Sweden. From 1990 to 1996, he was at Chalmers. In 1997 and 1998, he was at AT\&T Labs - Research in Murray Hill, NJ, USA, and in 1998 and 1999 he was at Ericsson Radio Systems AB, Kista, Sweden. Since 1999, he has been at Chalmers University, where he is a professor in communication systems. Further, he was a guest professor at Yonsei University, S. Korea, in 2003-2004. He is currently vice head of the department of Signals and Systems at Chalmers, with responsibility for undergraduate and master education. His research interests include communication, data compression, and modeling and compensation of non-ideal hardware components (e.g. amplifiers, oscillators, modulators in communication transmitters and receivers).
\end{IEEEbiography}
\end{document}